\documentclass[11pt]{article}

\usepackage{amsmath,amssymb,amsbsy,amsfonts,amsthm,latexsym,
            amsopn,amstext,amsxtra,euscript,amscd,color}
\usepackage{verbatim}

\usepackage{caption}

\newfont{\teneufm}{eufm10}
\newfont{\seveneufm}{eufm7}
\newfont{\fiveeufm}{eufm5}

\newfam\eufmfam
              \textfont\eufmfam=\teneufm \scriptfont\eufmfam=\seveneufm
              \scriptscriptfont\eufmfam=\fiveeufm

\newtheorem{thm}{Theorem}
\newtheorem{lem}[thm]{Lemma}

\newtheorem{prop}[thm]{Proposition}
\newtheorem{rem}[thm]{Remark}
\newtheorem{defn}[thm]{Definition}

\newcommand{\Tr}{{\rm Tr}}
\newcommand{\Trn}{{\rm Tr}_n}
\newcommand{\Trm}{{\rm Tr}_m}
\newcommand{\cda}{{_cD_a}}

\def\+{\oplus}

\def\cW{{\mathcal W}}

\def\F{{\mathbb F}}

\def\F{{\mathbb F}}
\def\K{{\mathbb K}}

\def\00{{\bf 0}}
\def\11{{\bf 1}}
\def\+{\oplus}

\def\\{\cr}
\def\({\left(}
\def\){\right)}

\newcommand{\BBZ}{\mathbb{Z}}
\newcommand{\BBR}{\mathbb{R}}

\newcommand{\BBF}{\mathbb{F}}

\newcommand{\bwht}[2]{\mathcal{W}_{#1}(#2)}
\newcommand{\vwht}[3]{\mathcal{W}_{#1}(#2,#3)}

\newcommand{\cardinality}[1]{\# #1}

\providecommand{\newoperator}[3]{%
  \newcommand*{#1}{\mathop{#2}#3}}

\newoperator{\FD}{\mathrm{FD}}{\nolimits}

\begin{document}

\title{\bf  $C$-differentials, multiplicative uniformity and (almost) perfect $c$-nonlinearity}
\author{\Large P\aa l Ellingsen$^1$, Patrick Felke$^2$,  Constanza Riera$^1$, \\  \Large Pantelimon~St\u anic\u a$^3$, \Large Anton Tkachenko$^1$
\vspace{0.4cm} \\
$^1$Department of Computer Science,\\
 Electrical Engineering and Mathematical Sciences,\\
   Western Norway University of Applied Sciences,
  5020 Bergen, Norway;\\ {\tt \{pel, csr, atk\}@hvl.no}\\
$^2$University of Applied Sciences Emden-Leer, Constantiaplatz 4\\
 26723 Emden, Germany;  {\tt patrick.felke@hs-emden-leer.de}\\
$^3$Department of Applied Mathematics,
Naval Postgraduate School, \\
Monterey, CA 93943--5216,  USA; {\tt pstanica@nps.edu}
}
\date{\today}
\maketitle

\begin{abstract}
In this paper we define a new (output) multiplicative differential, and the corresponding $c$-differential uniformity.  With this new concept, even for characteristic $2$, there are perfect $c$-nonlinear (PcN) functions. We first characterize the $c$-differential uniformity of a function in terms of its Walsh transform. We further look at some of the known perfect nonlinear (PN)  and show that only one remains a PcN function, under a different condition on the parameters. In fact, the $p$-ary Gold PN function increases its $c$-differential uniformity significantly, under some conditions on the parameters. We then precisely characterize  the $c$-differential uniformity of the inverse function (in any dimension and characteristic), relevant for the Rijndael (and Advanced Encryption Standard) block cipher.
\end{abstract}
{\bf Keywords:} Boolean, $p$-ary functions, $c$-differentials, Walsh transform, differential uniformity, perfect and almost perfect $c$-nonlinearity\newline
{\bf MSC 2000}: 06E30, 11T06, 94A60, 94C10.

\section{Introduction and motivation}

In~\cite{BCJW02}, the authors used a new type of differential that is quite useful from a practical perspective for ciphers that utilize modular multiplication as a primitive operation. It is an extension of a type of differential cryptanalysis and it was used to cryptanalyse some existing ciphers (like a variant of the well-known IDEA cipher).
The authors argue that one should look (and some authors did) at other type of differentials for a Boolean (vectorial) function $F$ not only the usual $\left(F(x+a),F(x)\right)$. In~\cite{BCJW02}, the differential used in their attack was $\left(F(cx),F(x) \right)$.
Drawing inspiration from the mentioned  successful attempt, we therefore here start a theoretical analysis of an (output) multiplicative differential.
We first connect the differential uniformity (under this new concept) to the Walsh coefficients.
We next investigate  some of the known perfect nonlinear $p$-ary functions and show that with one exception (under a different condition on the parameters, though) they do not remain perfect nonlinear under the new concept. We also look at the Rijndael inverse function and its $c$-differential uniformity (for example, we show that in some instances the uniformity drops to $3$).

The objects of this study are Boolean and $p$-ary functions (where $p$ is an odd prime) and some of their differential properties.  We will introduce here only some needed notions, and the reader can consult~\cite{Bud14,CH1,CH2,CS17,MesnagerBook,Tok15} for more on Boolean and $p$-ary functions.

Let $n$ be a positive integer and $\F_{p^n}$ denote the  finite field with $p^n$ elements, and $\F_{p^n}^*=\F_{p^n}\setminus\{0\}$ (for $a\neq 0$, we often write $\frac{1}{a}$ to mean the inverse of $a$ in the considered finite field). Further, let $\F_p^m$ denote the $m$-dimensional vector space over $\F_p$.
We call a function from $\F_{p^n}$ to $\F_p$  a {\em $p$-ary Boolean function} on $n$ variables.
The cardinality of a set $S$ is denoted by $\cardinality{S}$.
For $f:\F_{p^n}\to \F_p$ we define the {\it Walsh-Hadamard transform} to be the integer-valued function
$\displaystyle
\bwht{f}{u}  = \sum_{x\in \F_{p^n}}\zeta_p^{f(x)-\Trn(u x)}, \ u \in \mathbb{F}_{p^n},
$
 where $\zeta_p= e^{\frac{2\pi i}{p}}$ and $\Trn:\F_{p^n}\to \F_p$ is the absolute trace function, given by $\Tr(x)=\sum_{i=0}^{n-1} x^{p^i}$.
Given a $p$-ary Boolean function $f$, the derivative of $f$ with respect to~$a \in \F_{p^n}$ is the Boolean function
$
 D_{a}f(x) =  f(x + a)- f(x), \mbox{ for  all }  x \in \F_{p^n}.
$

For positive integers $n$ and $m$, any map $F:\F_p^n\to\F_p^m$ is called a vectorial $p$-ary Boolean function, or $(n,m)$-function. When $m=n$, $F$ can be uniquely represented as a univariate polynomial over $\F_{p^n}$ (using the natural identification of the finite field with the vector space) of the form
$
F(x)=\sum_{i=0}^{p^n-1} a_i x^i,\ a_i\in\F_{p^n}.
$
The algebraic degree of $F$ is then the largest Hamming weight of the exponents $i$ with $a_i\neq 0$. For an $(n,m)$-function $F$, we define the Walsh transform $\vwht{F}{a}{b}$ to be the Walsh-Hadamard transform of its component function ${\rm Tr}_m(bF(x))$ at $a$, that is,
\[
  \vwht{F}{a}{b}=\sum_{x\in\F_{p^n}} \zeta_p^{\Trm(bF(x))-\Trn(ax)}, \text{ where $a\in \F_{p^n}, b\in \F_{p^m}$.}
\]

For an $(n,n)$-function $F$, and $a,b\in\F_{p^n}$, we let $\Delta_F(a,b)=\cardinality{\{x\in\F_{p^n} : F(x+a)-F(x)=b\}}$. We call the quantity
$\Delta_F=\max\{\Delta_F(a,b)\,:\, a,b\in \F_{p^n}, a\neq 0 \}$ the {\em differential uniformity} of $F$. If $\Delta_F\leq \delta$, then we say that $F$ is differentially $\delta$-uniform. If $\delta=1$, then $F$ is called a {\em perfect nonlinear} ({\em PN}) function, or {\em planar} functions. If $\delta=2$, then $F$ is called an {\em almost perfect nonlinear} ({\em APN}) function. It is well known that PN functions do not exist if $p=2$.

\section{$c$-Differentials}

It is natural to reflect about how the functions would respond, not only to the usual derivative, but to a more general derivative. Our proposal is inspired from a practical differential attack developed  in~\cite{BCJW02}.

\begin{defn}
Given a $p$-ary $(n,m)$-function   $F:\F_{p^n}\to \F_{p^m}$, and $c\in\F_{p^m}$, the (multiplicative) $c$-derivative of $F$ with respect to~$a \in \F_{p^n}$ is the  function
\[
 _cD_{a}F(x) =  F(x + a)- cF(x), \mbox{ for  all }  x \in \F_{p^n}.
\]
(Note that, if   $c=1$, then we obtain the usual derivative, and, if $c=0$ or $a=0$, then we obtain a shift of the function.)
\end{defn}

For an $(n,n)$-function $F$, and $a,b\in\F_{p^n}$, we let $_c\Delta_F(a,b)=\cardinality{\{x\in\F_{p^n} : F(x+a)-cF(x)=b\}}$. In the following, we call the quantity
$_c\Delta_F=\max\left\{_c\Delta_F(a,b)\,:\, a,b\in \F_{p^n}, \text{ and } a\neq 0 \text{ if $c=1$} \right\}$\begin{footnote}{Including $a=0$ for the case $c\neq 1$,  the equation $F(x)-cF(x)=b$ is of course,  $F(x)=b(1-c)^{-1}$, so we are looking here at how close $F$ is to a permutation polynomial, and similarly in the case $c=0$ for any $a$.}\end{footnote}
the {\em $c$-differential uniformity} of~$F$. If $_c\Delta_F=\delta$, then we say that $F$ is differentially $(c,\delta)$-uniform. If $\delta=1$, then $F$ is called a {\em perfect $c$-nonlinear} ({\em PcN}) function (certainly, for $c=1$, they only exist for odd characteristic $p$; however, one wonders whether they can exist for $p=2$ for $c\neq1$, and we shall argue later that that is actually true). If $\delta=2$, then $F$ is called an {\em almost perfect $c$-nonlinear} ({\em APcN}) function. It is easy to see that if $F$ is an $(n,n)$-function, that is, $F:\F_{p^n}\to\F_{p^n}$, then $F$ is PcN if and only if $_cD_a F$ is a permutation polynomial.

Furthermore, any nonconstant affine function is PcN for any $c\neq1$. Let $F(x)=Ax+B,0\neq A,B\in\F_{p^n}$, and let $c\neq 1$. Then $_cD_a F(x)= \alpha$ (for some $\alpha$) is equivalent to $A(x+a)+B-Acx-cB=\alpha$, that is, $A(1-c)x=\alpha+(c-1)B-Aa$, which has a unique solution. By a similar argument, the function $F(x)=Ax^{p^k}+B,0\neq A,B\in\F_{p^n}$is PcN for any $c\neq1$. We shall call these {\em trivial} PcN functions.

The solutions to the differential equation $_cD_{a}F(x) =b$ for various $c,b$ are not independent as the next proposition shows.
\begin{prop}
Let $F:\F_{p^n}\to \F_{p^m}$, $b_1,\,b_2\in \F_{p^m}$,  and $c_1\neq c_2\in \F_{p^m}^*$. If  $x_0$ is a  solution for $_{c_1}D_aF(x)=b_1$, then $x_0$ is also a solution for $_{c_2}D_aF(x)=b_2$ if and only if $\displaystyle F(x_0)=\frac{b_1-b_2}{c_2-c_1}$.
\end{prop}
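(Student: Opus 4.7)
The plan is to exploit the fact that the two defining equations
\[
F(x_0+a)-c_1F(x_0)=b_1, \qquad F(x_0+a)-c_2F(x_0)=b_2
\]
share the term $F(x_0+a)$, so that subtracting one from the other eliminates it and produces a single linear equation in the value $F(x_0)$.

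More precisely, assume the first equation holds by hypothesis. I would then show that the second equation is equivalent to the scalar equation $(c_2-c_1)F(x_0)=b_1-b_2$, obtained by subtraction. Since $c_1\neq c_2$ are both elements of $\F_{p^m}^*$ (in particular, $c_2-c_1\neq 0$ in $\F_{p^m}$), the field element $c_2-c_1$ is invertible, and the equivalence becomes $F(x_0)=(b_1-b_2)/(c_2-c_1)$. Both directions of the ``if and only if'' follow at once: if $F(x_0)$ has this value, adding $(c_2-c_1)F(x_0)=b_1-b_2$ to the first defining equation recovers the second; conversely, if both equations hold simultaneously, subtraction yields the identity for $F(x_0)$.

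There is essentially no obstacle here; the statement is a direct algebraic manipulation, and the only substantive observation is that $c_1\neq c_2$ ensures $c_2-c_1$ is invertible so that the division in the conclusion is well-defined. I would keep the proof to a short three- or four-line calculation rather than formalizing it further.
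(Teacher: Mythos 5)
Your proof is correct and follows exactly the same route as the paper's: subtract the two defining equations to cancel $F(x_0+a)$, obtain $(c_2-c_1)F(x_0)=b_1-b_2$, and divide by the nonzero element $c_2-c_1$. Your write-up is in fact slightly more careful than the paper's, since it makes the reverse implication explicit rather than calling it immediate.
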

\begin{proof}
Assume that
\[
_{c_1} D_a F(x_0)=b_1 \text{ and } _{c_2} D_a F(x_0)=b_2.
\]
Then
\begin{align*}
b_1&=F(x_0+a)-c_1 F(x_0)\\
b_2&=F(x_0+a)-c_2 F(x_0),
\end{align*}
which, by subtracting, renders the claim identity. The reciprocal is immediate.
\end{proof}

\section{Characterizing $c$-differential uniformity  via the Walsh transform}

In this section, we shall be using a method of Carlet~\cite{Car18} (which generalized the classical result of Chabaud and Vaudenay~\cite{CV95}) connecting the differential uniformity of an $(n,m)$-function to its Walsh coefficients. Since there are some subtle differences between the classical differential uniformity and our concept, we shall be proving (in any characteristic) the following result and some of its consequences, using the techniques of~\cite{Car18,CV95}. Generalizing the usual convolution of two functions $f,g$ in two variables over some cartesian product of fields $\BBF\times \K$, namely, $\displaystyle (f\otimes g)(a,b)=\sum_{x\in F,y\in K} f(x,y) g(a+x,b+y)$, we define a generalized convolution of the Walsh transforms (with a twist) of $F$ by
\begin{align*}
&(\cW_F\cW^c_F)^{\otimes (j+1)} (0,0)\\
&=\sum_{\substack{u_1,\ldots,u_j\in \F_{p^n}\\ v_1,\ldots,v_j\in \F_{p^m}}} \left(\overline{\cW_F}\left(\sum_{i=1}^j u_i,\sum_{i=1}^j v_i \right)\cW_F\left(\sum_{i=1}^j u_i,c\sum_{i=1}^j v_i \right)\right.\\
&\qquad\qquad\qquad\qquad\qquad\qquad \left. \cdot  \prod_{i=1}^j \cW_F(u_i,v_i)\overline{\cW_F}(u_i,cv_i)\right)
\end{align*}
(observe that it is a convolution since the sum  of the variables in each component is $0$).
We show the next theorem, which extends~\cite{Car18,CV95} to odd characteristics, as well as to the  $c$-differential context.
\begin{thm}
\label{thm:charact2}
Let $c\in\F_{p^m}$ and $n,m,\delta$ be fixed  positive integers. Let $F$  be an $(n,m)$-function, that is, $F:\F_{p^n}\to\F_{p^m}$. Let $\phi_\delta(x)=\sum_{j\geq 0} A_j x^j$ be a polynomial over $\BBR$ such that $\phi_\delta(x)=0$ for $x\in\BBZ,1\leq x\leq \delta$, and $\phi_\delta(x)>0$, for $x\in\BBZ, x> \delta$.  We then have
\begin{align*}
p^{2n} A_0+\sum_{j\geq 1} p^{-j(m+n)} A_j \left(\cW_F \cW_F^c \right)^{\otimes (j+1)} (0,0)\geq 0,
\end{align*}
with equality if and only if $F$ is $c$-differentially $\delta$-uniform.
\end{thm}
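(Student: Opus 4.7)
The plan is to adapt Carlet's method (which generalized the Chabaud--Vaudenay approach) to the $c$-differential setting by studying a single nonnegative quantity and evaluating it in two different ways. I would work with
\[
\Sigma := \sum_{a\in\F_{p^n},\, b\in\F_{p^m}} {}_c\Delta_F(a,b)\,\phi_\delta\!\left({}_c\Delta_F(a,b)\right),
\]
which directly encodes the hypothesis on $\phi_\delta$: each summand is nonnegative, since it vanishes when ${}_c\Delta_F(a,b)\in\{0,1,\ldots,\delta\}$ and is strictly positive otherwise. Therefore $\Sigma\ge 0$, with equality if and only if every ${}_c\Delta_F(a,b)\le\delta$, i.e.\ $F$ is $c$-differentially $\delta$-uniform. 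What remains is to recognize $\Sigma$ as the Walsh expression in the theorem.

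To reach the Walsh side, I would first invert ${}_c\Delta_F(a,b)$ in terms of the Walsh transform. Starting from $\mathbf 1_{y=0}=p^{-m}\sum_v\zeta_p^{\Trm(vy)}$ applied to $y=F(x+a)-cF(x)-b$, I would rewrite $\zeta_p^{\Trm(vF(x+a))}$ and $\overline{\zeta_p^{\Trm(cvF(x))}}$ via the Walsh inverses $p^{-n}\sum_u\cW_F(u,v)\zeta_p^{\Trn(u(x+a))}$ and $p^{-n}\sum_{u'}\overline{\cW_F(u',cv)}\zeta_p^{-\Trn(u'x)}$. Orthogonality on $\F_{p^n}$ collapses the double frequency sum via $\sum_x\zeta_p^{\Trn((u-u')x)}=p^n\mathbf 1_{u=u'}$, leaving
\[
{}_c\Delta_F(a,b)=\frac{1}{p^{n+m}}\sum_{u,v}\cW_F(u,v)\,\overline{\cW_F(u,cv)}\,\zeta_p^{\Trn(ua)-\Trm(vb)}.
\]

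Next I would raise this identity to the $(j+1)$-th power, sum over $(a,b)$, and use orthogonality again: the sums over $a$ and $b$ introduce a factor $p^{n+m}$ and force the constraints $\sum_{i=1}^{j+1}u_i=0=\sum_{i=1}^{j+1}v_i$. Eliminating $(u_{j+1},v_{j+1})$ via these constraints and using $\cW_F(-u,-v)=\overline{\cW_F(u,v)}$ converts the eliminated factor into the conjugate form appearing in the definition of the generalized convolution, giving
\[
\sum_{a,b}{}_c\Delta_F(a,b)^{j+1}=p^{-j(n+m)}\,(\cW_F\cW_F^c)^{\otimes(j+1)}(0,0)
\]
for every $j\ge 0$. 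The $j=0$ case is consistent with the direct count $\sum_{a,b}{}_c\Delta_F(a,b)=p^{2n}$, since $(\cW_F\cW_F^c)^{\otimes 1}(0,0)=|\cW_F(0,0)|^2=p^{2n}$. Expanding $\phi_\delta(x)=\sum_jA_jx^j$ in $\Sigma$ and substituting these moment identities separates the $j=0$ contribution $A_0 p^{2n}$ from $\sum_{j\ge 1}A_j p^{-j(n+m)}(\cW_F\cW_F^c)^{\otimes(j+1)}(0,0)$, producing exactly the right-hand side of the theorem.

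The hard part will be the bookkeeping in the third paragraph: correctly tracking conjugations, the sign flips when replacing $u_{j+1}$ by $-\sum_{i=1}^ju_i$ (and similarly for the $v_i$'s), and the resulting powers of $p$ coming from the two orthogonality relations. The conceptual move that makes the final formula clean is to multiply $\phi_\delta({}_c\Delta_F)$ by the extra factor ${}_c\Delta_F$ in the definition of $\Sigma$: this shifts every index up by one, so that the $j=0$ contribution matches the constant $A_0 p^{2n}$ and the sum over $j\ge 1$ matches the $(j+1)$-th convolution with precisely the prefactor $p^{-j(n+m)}$.
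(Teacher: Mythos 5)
Your proposal is correct and takes essentially the same route as the paper: Carlet's positivity argument for $\phi_\delta$ evaluated at the $c$-differential counts, combined with a Walsh-inversion/orthogonality computation of the moments, yielding exactly the convolution expression with the prefactors $p^{-j(m+n)}$. The only cosmetic difference is that you sum ${}_c\Delta_F(a,b)\,\phi_\delta({}_c\Delta_F(a,b))$ over output values $b\in\F_{p^m}$, whereas the paper sums $\phi_\delta(n_F(a,b,c))$ over input pairs $a,b\in\F_{p^n}$ with $n_F(a,b,c)=\#\{x:\ {}_cD_aF(x)={}_cD_aF(b)\}$; these quantities coincide (each output value is counted with its multiplicity, which is exactly your extra factor shifting $j$ to $j+1$), so the moment identities and the conclusion are identical.
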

\begin{proof}
Let $a\in\F_{p^n}, \gamma\in\F_{p^m}$ be arbitrary elements.  Certainly, the set $\{x\in\F_{p^n}\,|\, _cD_aF(x)=\gamma\}$ is empty if $\gamma$ is not of the form $_cD_aF(b)$, for some $b\in\F_{p^n}$. Therefore, it is enough to consider only the cardinality $n_F(a,b,c)=|\{x\in\F_{p^n}\,|\, _cD_aF(x)= {_cD_a}F(b)\}|>0$. From the imposed condition on the polynomial $\phi_\delta$, given $F:\F_{p^n}\to\F_{p^m}$, for all $a,b\in\F_{p^n}$, then,
\[
\displaystyle \sum_{j\geq 0} A_j \left(n_F(a,b,c) \right)^j\geq 0,
\]
with equality if and only if $_c\Delta_F(a,b)\leq\delta$. Consequently, running with all $a,b\in\F_{p^n}$, any $(n,m)$-function $F$ satisfies
\[
\sum_{j\geq 0} A_j \sum_{a,b\in\F_{p^n}} \left(n_F(a,b,c) \right)^j\geq 0,
\]
with equality if and only of $_c\Delta_F\leq\delta$.

Observe that
\[
\displaystyle n_F(a,b,c)=p^{-m} \sum_{x\in F_{p^n}, v\in F_{p^m}} \zeta_p^{\Trm(v({_cD_a}F(x)-{_cD_a}F(b)))},
\]
since
\begin{align}
\label{eq:trace}
  \sum_{v\in\F_{p^m}} \zeta_p^{\Trm(v\alpha)}=
\begin{cases}  0 & \text{ if } \alpha\neq 0\\
p^m & \text{ if } \alpha = 0.
\end{cases}
\end{align}

For a fixed $j\geq 1$, we then have
\begin{align*}
&\sum_{a,b\in\F_{p^n}} \left(n_F(a,b,c) \right)^j \\
&= p^{-jm} \sum_{a,b\in\F_{p^n}} \sum_{\substack{x_1,\ldots,x_j\in\F_{p^n}\\ v_1,\ldots,v_j\in \F_{p^m}}} \zeta_p^{\sum_{i=1}^j \Trm(v_i (\cda F(x_i)-\cda F(b))}\\
&= p^{-jm} \sum_{a,b\in\F_{p^n}} \sum_{\substack{x_1,\ldots,x_j\in\F_{p^n}\\ v_1,\ldots,v_j\in \F_{p^m}}} \zeta_p^{\sum_{i=1}^j \Trm(v_i (F(x_i+a)-cF(x_i)- F(b+a)+cF(b))} .
\end{align*}
We need to insert some factors in the above expressions to make up the Walsh coefficients.
By identity~\eqref{eq:trace},  $\displaystyle \sum_{u_i\in\F_{p^n}} \zeta_p^{\Trn (u_i(x_i+a-y_i))}=p^n$, if $y_i=x_i+a$ and $0$, otherwise. Similarly,
$\displaystyle \sum_{u_0\in\F_{p^n}} \zeta_p^{\Trn (u_0(d-a-b))}=p^n$, if $d=a+b$ and $0$, otherwise.  Therefore,
\allowdisplaybreaks
\begin{align*}
&\sum_{a,b\in\F_{p^n}} \left(n_F(a,b,c) \right)^j =p^{-mj} \ p^{-(j+1)n} \sum_{a,b,d\in\F_{p^n}} \sum_{\substack{x_1,\ldots,x_j,y_1,\ldots,y_j\in\F_{p^n}\\ u_0,\ldots,u_j\in \F_{p^n},v_1,\ldots,v_j\in \F_{p^m}}} \\
&\zeta_p^{\sum_{i=1}^j \left[\Trm(v_i (F(y_i)-F(d)-cF(x_i)+cF(b)))+\Trn(u_i(x_i+a-y_i))\right]+\Trn(u_0(d-a-b))}\\
&=p^{-j(m+n)-n}  \sum_{\substack{u_0,u_1,\ldots,u_j\in\F_{p^n}\\ v_1,\ldots,v_j\in \F_{p^m}}} \overline{\cW_F}\left(u_0,\sum_{i=1}^j v_i\right) \cW_F\left(u_0,c\sum_{i=1}^j v_i\right) \\
&\qquad\qquad\qquad \cdot \prod_{i=1}^j \left( \cW_F(u_i,v_i)\overline{\cW_F}(u_i,cv_i)\right) \sum_{a\in\F_{p^n}} \zeta_p^{\Trn \left(a\left(\sum_{i=1}^j u_i-u_0\right) \right)}\\
&=p^{-j(m+n)-n} \ p^n \sum_{\substack{u_1,\ldots,u_j\in\F_{p^n}\\ v_1,\ldots,v_j\in \F_{p^m}}} \overline{\cW_F}\left(\sum_{i=1}^j u_i,\sum_{i=1}^j v_i\right) \cW_F\left(\sum_{i=1}^j u_i,c\sum_{i=1}^j v_i\right) \\
&\qquad\qquad\qquad\qquad\qquad\qquad\qquad\qquad \cdot \prod_{i=1}^j \left( \cW_F(u_i,v_i)\overline{\cW_F}(u_i,cv_i)\right)\\
&=p^{-j(m+n)} \left(\cW_F\cW^c_F\right)^{\otimes (j+1)} (0,0).
\end{align*}
For $j=0$, $\sum_{a,b\in\F_{p^n}} \left(n_F(a,b,c) \right)^j =p^{2n}$ and
the theorem follows.
\end{proof}
\begin{rem}
For a fixed $\delta$, an example of such a polynomial $\phi_\delta$ satisfying the conditions of the above theorem is simply $\phi_\delta(x)=(x-1)(x-2)\cdots (x-\delta)$, which certainly satisfies the conditions $\phi_\delta(x)=0$, for $1\leq x\leq \delta$, $x\in\mathbb{Z}$ and 
$\phi_\delta(x)>0$, for $ x> \delta$, $x\in\mathbb{Z}$.
\end{rem}

While we can take other values, we will only  consider below the particular  cases of perfect and almost perfect $c$-nonlinear functions, as the results are quite nice.

\subsection{The case $\delta=1$ -- perfect $c$-nonlinear (PcN)}

We can take the polynomial $\phi_1(x)=x-1$, which certainly satisfies the conditions of Theorem~\ref{thm:charact2}. Thus $A_0=-1,A_1=1$ and the relation of Theorem~\ref{thm:charact2}  simplifies to
\[
-p^{2n}+p^{-(m+n)} \sum_{\substack{u\in\F_{p^n}\\ v\in\F_{p^m}}}| \cW_F(u,v)|^2 | \cW_F(u,cv)|^2\geq 0.
\]
Thus, we obtain the next result.
\begin{prop}
Let  $m,n$ be fixed  positive integers and $c\in\F_{p^m}$, $c\neq 1$. Let $F$  be an $(n,m)$-function. Then
\[
 \sum_{\substack{u\in\F_{p^n}\\ v\in\F_{p^m}}} | \cW_F(u,v)|^2 | \cW_F(u,cv)|^2\geq p^{3n+m},
 \]
 with equality if and only if $F$ is a perfect $c$-nonlinear (PcN).
\end{prop}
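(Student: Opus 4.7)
The plan is to specialize Theorem~\ref{thm:charact2} to $\delta=1$ using the polynomial $\phi_1(x)=x-1$. First I would verify that $\phi_1$ satisfies the hypotheses of the theorem: it vanishes at $x=1$ (the only integer in the closed interval $[1,1]$) and is strictly positive for every integer $x\ge 2$. Its coefficient sequence is $A_0=-1$, $A_1=1$, and $A_j=0$ for $j\ge 2$, so only the $j=0$ and $j=1$ terms in the conclusion of Theorem~\ref{thm:charact2} contribute.

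Substituting these coefficients into the inequality supplied by the theorem gives
\[
-p^{2n}+p^{-(m+n)}\,(\cW_F\cW_F^c)^{\otimes 2}(0,0)\ge 0,
\]
with equality if and only if $F$ is $c$-differentially $1$-uniform, which is the PcN condition. The next step is to unfold the level-two generalized convolution using its definition from the preceding subsection. At $j+1=2$, the indexing variables $u_1,\dots,u_j$ and $v_1,\dots,v_j$ collapse to a single pair $(u_1,v_1)\in\F_{p^n}\times\F_{p^m}$, and the summand becomes
\[
\overline{\cW_F}(u_1,v_1)\,\cW_F(u_1,cv_1)\cdot \cW_F(u_1,v_1)\,\overline{\cW_F}(u_1,cv_1)=|\cW_F(u_1,v_1)|^2\,|\cW_F(u_1,cv_1)|^2,
\]
since the two outer conjugate factors pair up naturally with the two inner factors at the same indices.

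Renaming $(u_1,v_1)$ as $(u,v)$ and multiplying the displayed inequality through by $p^{m+n}$ then produces the asserted bound
\[
\sum_{u\in\F_{p^n},\,v\in\F_{p^m}}|\cW_F(u,v)|^2|\cW_F(u,cv)|^2\ge p^{3n+m},
\]
and the equality condition transfers verbatim from Theorem~\ref{thm:charact2} to characterize PcN functions. There is no substantive obstacle: the analytic work has already been carried out in the proof of Theorem~\ref{thm:charact2}, and all that remains is the bookkeeping check that the two-factor convolution collapses into the sum of products of two squared Walsh magnitudes. The only subtlety worth flagging is the hypothesis $c\neq 1$, which ensures we are operating in the regime where $a=0$ is permitted in the $c$-differential uniformity (per the footnote in the definition), so that the double sum over $a,b\in\F_{p^n}$ in the proof of Theorem~\ref{thm:charact2} is consistent with the notion of PcN being characterized.
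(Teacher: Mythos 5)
Your proposal is correct and follows exactly the paper's argument: specialize Theorem~\ref{thm:charact2} with $\phi_1(x)=x-1$ (so $A_0=-1$, $A_1=1$), collapse the two-factor convolution into $\sum_{u,v}|\cW_F(u,v)|^2|\cW_F(u,cv)|^2$, and clear the factor $p^{-(m+n)}$. No differences worth noting.
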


\subsection{The case $\delta=2$ -- almost perfect $c$-nonlinear (APcN)}

We can take the polynomial $\phi_1(x)=(x-1)(x-2)=x^2-3x+2$, which certainly satisfies the conditions of Theorem~\ref{thm:charact2}. Thus $A_0=2,A_1=-3,A_2=1$ and the relation of Theorem~\ref{thm:charact2} simplifies to
\begin{align*}
&2\cdot p^{2n}-3\cdot p^{-(m+n)} \sum_{\substack{u\in\F_{p^n}\\ v\in\F_{p^m}}}  | \cW_F(u,v)|^2 | \cW_F(u,cv)|^2\\
&\quad+p^{-2(m+n)}
 \sum_{\substack{u_1,u_2\in\F_{p^n}\\ v_1,v_2\in\F_{p^m}}}  \overline{\cW_F}(u_1+u_2,v_1+v_2)\cW_F(u_1+u_2,c(v_1+v_2))\\
 &\qquad\qquad\qquad\qquad  \cdot
  \cW_F(u_1,v_1)\cW_F(u_2,v_2) \overline{\cW_F}(u_1,cv_1)\overline{\cW_F}(u_2,cv_2)\geq 0.
\end{align*}
We then have the following result.
\begin{prop}
Let  $m,n$ be fixed  positive integers and $c\in\F_{p^m}$, $c\neq 1$. Let $F$  be an $(n,m)$-function. Then
\begin{align*}
&\sum_{\substack{u_1,u_2\in\F_{p^n}\\ v_1,v_2\in\F_{p^m}}}   \overline{\cW_F}(u_1+u_2,v_1+v_2)\cW_F(u_1+u_2,c(v_1+v_2))\\
 &\qquad\qquad\qquad  \cdot
  \overline{\cW_F}(u_1,v_1)\overline{\cW_F}(u_2,v_2) \cW_F(u_1,cv_1)\cW_F(u_2,cv_2)\\
  & \geq
 3\cdot p^{m+n}  \sum_{\substack{u\in\F_{p^n}\\ v\in\F_{p^m}}}  | \cW_F(u,v)|^2 | \cW_F(u,cv)|^2 - 2\cdot p^{2(2n+m)},
 \end{align*}
 with equality if and only if $F$ is an  almost perfect $c$-nonlinear (APcN).
\end{prop}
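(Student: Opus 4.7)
The plan is to apply Theorem~\ref{thm:charact2} directly with $\delta = 2$, taking the quadratic $\phi_2(x) = (x-1)(x-2) = x^2 - 3x + 2$. This satisfies the hypotheses since $\phi_2(1) = \phi_2(2) = 0$ and $\phi_2(k) = (k-1)(k-2) \ge 2$ for every integer $k \ge 3$. Reading off the coefficients gives $A_0 = 2$, $A_1 = -3$, $A_2 = 1$, with $A_j = 0$ for $j \ge 3$, so the general inequality of Theorem~\ref{thm:charact2} collapses to the three-term relation
\[
2 p^{2n} - 3 p^{-(m+n)} \bigl(\cW_F\cW_F^c\bigr)^{\otimes 2}(0,0) + p^{-2(m+n)} \bigl(\cW_F\cW_F^c\bigr)^{\otimes 3}(0,0) \ge 0,
\]
with equality if and only if $F$ is APcN.

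The next step is to unfold the two convolutions from the definition preceding Theorem~\ref{thm:charact2}. For $j = 1$, the index set consists of a single pair $(u_1, v_1)$; using $\overline{\cW_F}(u_1,v_1)\cW_F(u_1,v_1) = |\cW_F(u_1,v_1)|^2$, and the analogous identity at the shifted argument $cv_1$, this term reduces to $\sum_{u,v} |\cW_F(u,v)|^2 |\cW_F(u,cv)|^2$, exactly as appeared in the preceding PcN proposition. For $j = 2$, the convolution is by definition the fourfold sum over $(u_1, u_2, v_1, v_2)$ of the outer factor $\overline{\cW_F}(u_1+u_2, v_1+v_2)\cW_F(u_1+u_2, c(v_1+v_2))$ times $\prod_{i=1}^2 \cW_F(u_i, v_i) \overline{\cW_F}(u_i, cv_i)$.

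Substituting both into the displayed inequality, multiplying through by $p^{2(m+n)}$ (noting $2 p^{2n} \cdot p^{2(m+n)} = 2 p^{2(2n+m)}$ and $3 p^{-(m+n)} \cdot p^{2(m+n)} = 3 p^{m+n}$), and moving the $j = 2$ term to the left-hand side produces exactly the claimed bound. The ``if and only if'' clause transfers verbatim from the equality condition of Theorem~\ref{thm:charact2}.

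The one piece of bookkeeping that requires care, and which is the nearest thing to an obstacle, is that the inner-product factor in the statement reads $\overline{\cW_F}(u_i, v_i)\cW_F(u_i, cv_i)$, whereas a straight expansion of the convolution produces its complex conjugate $\cW_F(u_i, v_i)\overline{\cW_F}(u_i, cv_i)$. This is harmless: Theorem~\ref{thm:charact2} identifies the $j = 2$ convolution with $p^{2(m+n)} \sum_{a,b} (n_F(a,b,c))^2$, which is a nonnegative integer, hence real. Consequently the sum equals its own complex conjugate, so either conjugation pattern in the integrand yields the same total and the two displayed forms agree. No further analytic input is needed; the proof is a direct specialization.
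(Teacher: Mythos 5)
Your derivation is the paper's own: specialize Theorem~\ref{thm:charact2} to $\phi_2(x)=(x-1)(x-2)$, read off $A_0=2$, $A_1=-3$, $A_2=1$, expand the $j=1$ and $j=2$ terms of the convolution, multiply by $p^{2(m+n)}$ and rearrange. Up to that point the argument is correct and coincides with what the paper does.

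The place where you go beyond the paper is the reconciliation of the conjugation pattern, and that step does not hold up. Realness of the $j=2$ convolution only licenses replacing its integrand by its \emph{full} complex conjugate, namely $\cW_F(u_1+u_2,v_1+v_2)\overline{\cW_F}(u_1+u_2,c(v_1+v_2))\prod_{i}\overline{\cW_F}(u_i,v_i)\cW_F(u_i,cv_i)$. The integrand in the proposition conjugates only the inner factors while keeping the outer factor $\overline{\cW_F}(u_1+u_2,v_1+v_2)\cW_F(u_1+u_2,c(v_1+v_2))$ as in the definition, so it is neither the convolution's integrand nor its full conjugate. Concretely, setting $Y(u,v)=\overline{\cW_F}(u,v)\cW_F(u,cv)$ and using $Y(-u,-v)=\overline{Y(u,v)}$, the convolution equals $\sum_{w_1+w_2+w_3=0}Y(w_1)Y(w_2)Y(w_3)$, which unwinds to $p^{2(m+n)}\sum_{a,\beta}\,{_c\Delta_F}(a,\beta)^3$, whereas the proposition's sum is $\sum_{w_1+w_2=w_3}Y(w_1)Y(w_2)Y(w_3)$, which unwinds to $p^{2(m+n)}\sum_{a,\beta}\,{_c\Delta_F}(-a,-\beta)\,{_c\Delta_F}(a,\beta)^2$; these need not coincide for a general $F$ and $c$. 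The root cause is a misprint in the proposition itself: the display obtained from Theorem~\ref{thm:charact2} immediately above it carries the correct pattern $\cW_F(u_i,v_i)\overline{\cW_F}(u_i,cv_i)$. So your proof in fact establishes the inequality with the convolution's conjugation pattern (which is the intended statement); the ``the sum is real, so either pattern gives the same total'' sentence should be removed rather than relied upon.
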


In the spirit of Berger et al.~\cite{Berger06} and Carlet~\cite{Carlet19}, we can also express the $c$-differential uniformity of an $(n,m)$-function $F$ in terms of the Walsh transform of the $c$-derivative of~$F$. We will omit the proof as it is similar to the one of the classical case.
\begin{thm}
Let $1\neq c\in\F_{p^m}$, $n,m,\delta$ be fixed  positive integers, and let $F$  be an $(n,m)$-function. Let $\phi_\delta(x)=\sum_{j\geq 0} A_j x^j$ be a polynomial over $\BBR$ such that $\phi_\delta(x)=0$ for $x\in\BBZ,1\leq x\leq \delta$, and $\phi_\delta(x)>0$, for $x\in\BBZ, x> \delta$.  We then have
\begin{align*}
p^{n} A_0+\sum_{j\geq 1} p^{-jm} A_j  \sum_{v_1,\ldots,v_j\in\F_{p^m}}\cW_{_cD_aF}\left(0,\sum_{i=1}^j v_i\right) \prod_{i=1}^j \cW_{_cD_aF}(0,v_i)\geq 0,
\end{align*}
with equality if and only if $F$ is $c$-differentially $\delta$-uniform.
\end{thm}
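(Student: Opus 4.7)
The plan is to mirror the strategy used in Theorem~\ref{thm:charact2}, with one change: we apply the same counting/orthogonality bookkeeping with the $c$-derivative $_cD_aF$ playing the role of the ``base'' function, so that only Walsh coefficients of $_cD_aF$ appear. Fix $a\in\F_{p^n}$ and, using the device from the earlier proof, set
\[
\tilde n_a(b) = \cardinality{\{x\in\F_{p^n}: {_cD_a}F(x)={_cD_a}F(b)\}},
\]
which is automatically $\geq 1$ since $x=b$ is a solution. By the sign pattern of $\phi_\delta$, $\phi_\delta(\tilde n_a(b))\geq 0$ for every $b$, with equality iff $\tilde n_a(b)\leq\delta$. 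Summing over $b\in\F_{p^n}$ yields
\[
A_0\,p^n + \sum_{j\geq 1} A_j \sum_{b\in\F_{p^n}} \tilde n_a(b)^j \;\geq\; 0,
\]
with equality iff $\tilde n_a(b)\leq\delta$ for every $b$, equivalently iff $_c\Delta_F(a,\gamma)\leq\delta$ for every $\gamma\in\F_{p^m}$; imposing this for all admissible $a$ is precisely the condition that $F$ be $c$-differentially $\delta$-uniform.

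Next I would rewrite $\sum_b \tilde n_a(b)^j$ in terms of $\cW_{_cD_aF}(0,\cdot)$. Using~\eqref{eq:trace},
\[
\tilde n_a(b)=p^{-m}\sum_{y\in\F_{p^n},\,v\in\F_{p^m}} \zeta_p^{\Trm(v({_cD_a}F(y)-{_cD_a}F(b)))}.
\]
Raising to the $j$-th power introduces independent dummy variables $y_1,\ldots,y_j$ and $v_1,\ldots,v_j$; distributing $\sum_b$ separates (i) $j$ inner sums over the $y_i$, each equal to $\cW_{_cD_aF}(0,v_i)$, and (ii) a single sum $\sum_b\zeta_p^{-\Trm((\sum_i v_i){_cD_a}F(b))}$, which equals $\cW_{_cD_aF}(0,-\sum_i v_i)=\overline{\cW_{_cD_aF}(0,\sum_i v_i)}$. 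The conjugation is absorbed by the bijective change of variables $v_i\mapsto -v_i$ on $\F_{p^m}^{\,j}$, which identifies the resulting sum with the expression written in the statement. One thereby obtains
\[
\sum_b \tilde n_a(b)^j = p^{-jm}\sum_{v_1,\ldots,v_j\in\F_{p^m}} \cW_{_cD_aF}\Bigl(0,\textstyle\sum_{i=1}^j v_i\Bigr) \prod_{i=1}^j \cW_{_cD_aF}(0,v_i),
\]
and substitution into the displayed summed inequality gives the claim.

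The only real obstacle is the index bookkeeping when expanding $\tilde n_a(b)^j$: one must correctly identify which variable gets collapsed by the outer $\sum_b$ via orthogonality, and justify the passage from the conjugate to the non-conjugate Walsh coefficient through the $v_i\mapsto -v_i$ symmetry of the summation domain. Everything else is a direct transcription of the corresponding manipulation in the proof of Theorem~\ref{thm:charact2}, which is presumably why the authors declare the proof ``similar to the one of the classical case'' and suppress the details.
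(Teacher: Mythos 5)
Your overall strategy is the right one and is surely what the authors had in mind when they omitted the proof: fix $a$, count $\tilde n_a(b)$, apply $\phi_\delta$, sum over $b$, and convert to Walsh coefficients of ${_cD_a}F$ via the orthogonality relation~\eqref{eq:trace}. Everything up to and including the identity
\[
\sum_{b\in\F_{p^n}}\tilde n_a(b)^j=p^{-jm}\sum_{v_1,\ldots,v_j\in\F_{p^m}}\overline{\cW_{_cD_aF}\Bigl(0,\sum_{i=1}^j v_i\Bigr)}\,\prod_{i=1}^j\cW_{_cD_aF}(0,v_i)
\]
is correct, as is your treatment of the equality case (equality for every admissible $a$ is what yields $\delta$-uniformity).

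The flaw is the very last step. The substitution $v_i\mapsto -v_i$ does \emph{not} absorb the conjugation: it sends $\overline{\cW_{_cD_aF}(0,\sum_i v_i)}\prod_i\cW_{_cD_aF}(0,v_i)$ to $\cW_{_cD_aF}(0,\sum_i v_i)\prod_i\overline{\cW_{_cD_aF}(0,v_i)}$, i.e.\ it merely relocates the bar from the outer factor onto the $j$ inner factors. For odd $p$ the relevant Walsh values are genuinely complex, and the unconjugated sum written in the statement is a different quantity: setting $m_a(\gamma)=\cardinality{\{x\in\F_{p^n}:{_cD_a}F(x)=\gamma\}}$, orthogonality gives
\[
\sum_{v_1,\ldots,v_j}\cW_{_cD_aF}\Bigl(0,\sum_{i=1}^j v_i\Bigr)\prod_{i=1}^j\cW_{_cD_aF}(0,v_i)=p^{jm}\sum_{\gamma\in\F_{p^m}}m_a(\gamma)\,m_a(-\gamma)^j,
\]
whereas your correctly derived conjugated sum equals $p^{jm}\sum_{\gamma}m_a(\gamma)^{j+1}$. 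These differ whenever the image of ${_cD_a}F$ is not symmetric under $\gamma\mapsto-\gamma$; with $\phi_1(x)=x-1$ the unconjugated left-hand side of the theorem can even be negative, so the literal unconjugated statement cannot be reached this way and is in general false for odd $p$. The resolution is that the theorem should carry a conjugate on the first Walsh factor, exactly as Theorem~\ref{thm:charact2} does with $\overline{\cW_F}$ in the corresponding position; your argument proves precisely that corrected statement, and for $p=2$ (where all these Walsh values are real integers) the two forms coincide. So keep the bar rather than trying to remove it, and note explicitly that the inequality is asserted for each fixed $a$, with equality holding for all $a$ if and only if $F$ is $c$-differentially $\delta$-uniform.
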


\section{$c$-Differential uniformity for some known $PN$ classes}

If $c=1$, the following are some of the known classes of PN~\cite{CS97,DY06} (recall that they exist if and only if $p$ is odd).
\begin{thm}
\label{thm-PN}
The following functions $:\F_{p^n}\to\F_{p^n}$ are perfect nonlinear:
\begin{itemize}
\item[$(1)$] $F(x)=x^2$ on $\F_{p^n}$.
\item[$(2)$]  $F(x)=x^{p^k+1}$ on $\F_{p^n}$  is PN if and only if $\frac{n}{\gcd(k,n)}$ is odd.
\item[$(3)$] $F(x)=x^{10}\pm x^6 - x^2$ is PN over $\F_{3^n}$ if and only if $n=2$ or $n$ is odd. In general, for $u\in \F_{3^n}$, $F(x)=x^{10}-u x^6 - u^2 x^2$ is PN  over $\F_{3^n}$ if $n$ is odd.
\item[$(4)$] $F(x)=x^{(3^k+1)/2}$  is PN over $\F_{3^n}$  if and only if $\gcd(k,n)=1$ and $n$ is odd.
\end{itemize}
\end{thm}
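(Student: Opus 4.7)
The statement collects four items from \cite{CS97,DY06}, so the plan is to give short direct verifications for the polynomial cases (1) and (2) and to defer the more technical arguments (3) and (4) to those references.

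For item (1), the plan is a one-line computation: $(x+a)^2-x^2=2ax+a^2$, which is affine in $x$ with leading coefficient $2a$, and $2a\neq 0$ for every $a\neq 0$ whenever $p$ is odd. Hence $D_aF(x)=b$ has exactly one solution $x=(b-a^2)/(2a)$, so $F$ is PN.

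For item (2), I would expand
\[
(x+a)^{p^k+1}-x^{p^k+1}=a^{p^k}x+a\,x^{p^k}+a^{p^k+1},
\]
reducing the PN property to the bijectivity of the $\F_p$-linear map $L(x)=a\,x^{p^k}+a^{p^k}x$ for every $a\neq 0$. Since $L$ is linear, bijectivity is equivalent to $\ker L=\{0\}$, and for $x\neq 0$ the equation $L(x)=0$ becomes $(x/a)^{p^k-1}=-1$. Thus PN holds iff $-1$ is not a $(p^k-1)$-th power in $\F_{p^n}^*$. The image of $z\mapsto z^{p^k-1}$ is the unique subgroup of order $(p^n-1)/(p^{\gcd(k,n)}-1)$, which contains the element of order $2$ (namely $-1$) iff this index is even. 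Writing $d=\gcd(k,n)$, the quotient is
\[
\frac{p^n-1}{p^d-1}=1+p^d+p^{2d}+\cdots+p^{(n/d-1)d},
\]
a sum of $n/d$ odd terms, hence odd exactly when $n/d$ is odd. Therefore $-1$ fails to be a $(p^k-1)$-th power iff $n/\gcd(k,n)$ is odd, which is precisely the claimed equivalence.

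For items (3) and (4), the calculations are considerably more delicate. The plan here would be to follow Coulter--Matthews and Ding--Yuan: for (3) one substitutes and factors the difference polynomial $F(x+a)-F(x)-b$ over $\F_{3^n}$, reducing to analyzing zeros of an associated cubic/Dickson-type polynomial whose discriminant behavior depends on the parity of $n$; for (4) one uses that the exponent $(3^k+1)/2$ behaves as a Dickson-polynomial-like map on $\F_{3^n}$, and planarity reduces to the bijectivity of a linearized polynomial whose kernel condition again reduces to a parity/$\gcd$ question on $n,k$. The main obstacle in (3)–(4) is the ad hoc nature of these factorizations, so rather than re-deriving them I would simply cite \cite{CS97,DY06} and record the equivalences as stated.
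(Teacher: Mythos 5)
The paper states this theorem without proof, as a survey of known PN classes with citations to \cite{CS97,DY06}; so there is no in-paper argument to compare against. Your direct verifications of items (1) and (2) are correct and self-contained, which is more than the paper offers. For (1), the derivative $2ax+a^2$ is indeed a bijection of $x$ for $a\neq 0$ and $p$ odd. For (2), the reduction to the kernel of the linearized map $L(x)=a\,x^{p^k}+a^{p^k}x$, then to whether $-1$ is a $(p^k-1)$-th power, then to the parity of $(p^n-1)/(p^{\gcd(k,n)}-1)=1+p^d+\cdots+p^{(n/d-1)d}$, is the standard and correct chain of equivalences; note only that this last quantity is the \emph{order} of the image subgroup, not its index (the index is $p^{\gcd(k,n)}-1$), though the criterion you use --- a subgroup of a cyclic group contains the unique involution $-1$ iff its order is even --- is applied correctly. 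Deferring (3) and (4) to \cite{CS97,DY06} is exactly what the paper does for all four items, so no gap there.
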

It is not surprising that the above monomials are the only known planar (PN) functions since it was shown by R\'onyai and Sz\"onyi~\cite{RS89} that all planar functions are quadratic and by Zieve~\cite{Z15} that if the degree $n$ of $\F_{p^n}$ is large enough, namely, $p^n\geq (k-1)^4$, then the only PN monomials $x^k$  are in  the list above. It is conjectured that these are the only PN power functions.

Below we will investigate the same functions regarding their $PcN$ property for $c\neq 1$.
We first start with a lemma, which is possibly known (but we could not find an appropriate reference).
\begin{lem}
\label{lem:gcd}
Let $p,k,n$ be integers greater than or equal to $1$ (we take $k\leq n$, though the result can be shown in general). Then
\begin{align*}
&  \gcd(2^{k}+1,2^n-1)=\frac{2^{\gcd(2k,n)}-1}{2^{\gcd(k,n)}-1},  \text{ and if  $p>2$, then}, \\
& \gcd(p^{k}+1,p^n-1)=2,   \text{ if $\frac{n}{\gcd(n,k)}$  is odd},\\
& \gcd(p^{k}+1,p^n-1)=p^{\gcd(k,n)}+1,\text{ if $\frac{n}{\gcd(n,k)}$ is even}.\end{align*}
Consequently, if either $n$ is odd, or $n\equiv 2\pmod 4$ and $k$ is even,   then $\gcd(2^k+1,2^n-1)=1$ and $\gcd(p^k+1,p^n-1)=2$, if $p>2$.
\end{lem}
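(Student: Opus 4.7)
The plan is to reduce everything to the classical identity $\gcd(p^a-1, p^b-1) = p^{\gcd(a,b)}-1$ together with the factorization $p^{2k}-1 = (p^k-1)(p^k+1)$. Setting $g := \gcd(p^k+1, p^n-1)$, since $p^k+1 \mid p^{2k}-1$ we get the key containment $g \mid p^{\gcd(2k,n)}-1$, which is the common starting point for all three formulas.

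For $p=2$, the factors $2^k-1$ and $2^k+1$ are coprime, so the elementary fact $\gcd(AB,C) = \gcd(A,C)\gcd(B,C)$ when $\gcd(A,B)=1$, applied to $(2^k-1)(2^k+1) = 2^{2k}-1$ against $2^n-1$, yields
\[
2^{\gcd(2k,n)}-1 \;=\; \bigl(2^{\gcd(k,n)}-1\bigr)\cdot \gcd(2^k+1, 2^n-1),
\]
which rearranges to the stated formula.

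For $p$ odd, write $d = \gcd(k,n)$, $n = dn'$, $k = dk'$ with $\gcd(k',n')=1$, and split on the parity of $n'$. If $n'$ is odd then $\gcd(2k,n)=d$, so $g \mid p^d-1 \mid p^k-1$; combined with $g \mid p^k+1$ this forces $g \mid 2$, and the evenness of both $p^k+1$ and $p^n-1$ gives $g = 2$. If $n'$ is even (so $k'$ is odd), I will first show $p^d+1 \mid g$: the factorization $(p^d)^{k'}+1 = (p^d+1)\bigl((p^d)^{k'-1} - (p^d)^{k'-2} + \cdots + 1\bigr)$ for odd $k'$ gives $p^d+1 \mid p^k+1$, while $p^d \equiv -1 \pmod{p^d+1}$ together with $2d \mid n$ (equivalent to $n'$ being even) gives $p^d+1 \mid p^n-1$. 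Since $\gcd(2k,n) = 2d$, the containment $g \mid p^{2d}-1 = (p^d-1)(p^d+1)$ lets me write $g = (p^d+1)h$ with $h \mid p^d-1 \mid p^k-1$; combined with $g \mid p^k+1$ this forces $h \mid 2$.

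The main obstacle is ruling out $h=2$, and for this I would compare $2$-adic valuations: the quotient $(p^k+1)/(p^d+1) = \sum_{i=0}^{k'-1}(-1)^i(p^d)^{k'-1-i}$ is an alternating sum of an odd number ($k'$) of odd integers (since $p$ is odd), hence itself odd, so $v_2(p^k+1) = v_2(p^d+1)$, which excludes $h=2$ and forces $g = p^d+1$. For the ``consequently'' statement: if $n$ is odd then $n'$ divides $n$ and is odd; if $n \equiv 2 \pmod 4$ and $k$ is even, write $n=2m$ with $m$ odd and $k=2k_0$, so that $\gcd(k,n) = 2\gcd(k_0,m)$ and $n' = m/\gcd(k_0,m)$ divides the odd integer $m$, hence is odd. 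In both scenarios $n'$ is odd, so the main formulas specialize to $1$ (when $p=2$) or $2$ (when $p$ is odd).
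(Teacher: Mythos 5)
Your proof is correct. The $p=2$ part coincides with the paper's argument: both split $2^{2k}-1=(2^k-1)(2^k+1)$ into coprime factors and use multiplicativity of the gcd together with $\gcd(2^r-1,2^n-1)=2^{\gcd(r,n)}-1$. For odd $p$ your route genuinely differs. The paper runs an explicit Euclidean (Fermat-type) descent on the exponents, repeatedly reducing $\gcd\left(p^{d\ell}+1,p^{dm}-1\right)$ until it terminates at $\gcd\left(p^{d\ell}+1,p^{d}-1\right)=2$ or $\gcd\left(p^{d\ell}+1,p^{2d}-1\right)=p^d+1$ according to the parity of $m$, and it even treats $k=1$ separately. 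You instead sandwich $g=\gcd(p^k+1,p^n-1)$ between two divisibility constraints: the upper bound $g\mid p^{\gcd(2k,n)}-1$ (from $p^k+1\mid p^{2k}-1$), which in the odd case immediately forces $g\mid\gcd(p^k+1,p^k-1)=2$, and in the even case the lower bound $p^d+1\mid g$ obtained from the factorization of $x^{k'}+1$ for odd $k'$ and from $2d\mid n$; the residual cofactor $h$ is then pinned to $1$ by comparing $2$-adic valuations, using that $(p^k+1)/(p^d+1)$ is an alternating sum of an odd number of odd terms, hence odd. Your argument is shorter, avoids the case bookkeeping of the descent, and the valuation step makes explicit a point the paper's termination argument handles only implicitly; the paper's descent, on the other hand, is more elementary in that it uses nothing beyond the Euclidean algorithm. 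Your derivation of the ``consequently'' clause (reducing both hypotheses to the statement that $n/\gcd(n,k)$ is odd, equivalently $\gcd(2k,n)=\gcd(k,n)$) matches the paper's.
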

\begin{proof}
We shall use below the well-known identity
\[
\gcd(p^r-1,p^n-1)=p^{\gcd(r,n)}-1.
\]
Using the above identity, we first write $\gcd(p^{2k}-1,p^n-1)=p^{\gcd(2k,n)}-1$, and since $\gcd(p^k-1,p^k+1)=2$, unless $p=2$, in which case $\gcd(2^k-1,2^k+1)=1$, then we get
\begin{align*}
\gcd(2^{2k}-1,2^n-1)&= \gcd(2^{k}-1,2^n-1)\cdot \gcd(2^{k}+1,2^n-1)\\
&= \left(2^{\gcd(k,n)}-1 \right) \gcd(2^{k}+1,2^n-1),
\end{align*}
and the first claim is shown.

We now assume that $p>2$. If $k=1$, we observe that
\[
\gcd(p+1,p^n-1)=\gcd(p+1,p^n-1+p+1)=\gcd(p+1,p^{n-1}+1).
\]
If $n$ is even, then $n-1=2r+1$ (for some $r$) is odd and using the decomposition $p^{2r+1}+1=(p+1)(p^{2r}+p^{2r-1}+\cdots+p+1)$, we see that $\gcd(p+1,p^{n-1}+1)=p+1$.
If $n$ is odd, then $n-1=2r$ (for some $r$)  is even and consequently we continue the displayed reduction and arrive at $\gcd(p+1,p^n-1)=\gcd(p+1,p^{2r}+1)=\gcd(p+1,p^{2r-2}+1)=\cdots=\gcd(p+1,p^0+1)=2$.

We next let $k\geq 2$, next.
Let $d=\gcd(n,k)$ and $n=d m,k=d \ell$, with $\gcd(m,\ell)=1$.
First observe that
\begin{equation}
\begin{split}
\label{eq;descent}
\gcd\left(p^{d\ell}+1,p^{dm}-1 \right)
&=\gcd\left(p^{d\ell}+1,p^{dm}+p^{d\ell} \right) \\
& =\gcd\left(p^{d\ell}+1,p^{d(m-\ell)}+1\right) \\
&=\gcd\left(p^{d\ell}+1,p^{d(m-\ell)}-p^{d\ell} \right) \\
&=\begin{cases}
\gcd\left(p^{d\ell}+1,p^{d(m-2\ell)}-1\right) &\text{if } m\geq 2\ell\\
\gcd\left(p^{d\ell}+1,p^{d(2\ell-m)}-1\right)  &\text{if } m<2\ell.
\end{cases}
\end{split}
\end{equation}
In both cases, we see that $|m-2\ell |<m$. We continue the process and apply Fermat's descent method below. We consider two cases.

\noindent
{\em Case $1:$ $m$ is odd}. Since $m$ is odd, the process above arrives at
\[
\gcd\left(p^{d\ell}+1,p^{dm}-1 \right) =\gcd\left(p^{d\ell}+1,p^d-1 \right).
\]
Now, we switch sides and concentrate on the first expression. We obtain
\begin{align*}
\gcd\left(p^d-1,p^{d\ell}+1 \right)
&=\gcd\left(p^d-1,p^{d\ell}+1-p^{d(\ell-1)}(p^d-1)\right)\\
&= \gcd\left(p^d-1, p^{d(\ell-1)}+1\right)=\cdots\\
&= \gcd\left(p^d-1,p^d+1\right)=2.
\end{align*}

 \noindent
{\em Case $2:$ $m$ is even}. Then the reduction from~\eqref{eq;descent} arrives at (recall that now, $\ell$ is odd)
\allowdisplaybreaks
\begin{align*}
\gcd\left(p^{d\ell}+1,p^{dm}-1 \right)
&=\gcd\left(p^{d\ell}+1,p^{2d}-1 \right)\\
&=\gcd\left(p^{2d}-1,p^{d\ell}+p^{2d}\right)\\
&=\gcd\left(p^{2d}-1,p^{d(\ell-2)}+1\right)=\cdots\\
&=\gcd\left(p^{2d}-1,p^d+1\right)=p^d+1.
\end{align*}

Lastly, if  $n$ be odd and $k$ arbitrary, then $\gcd(2k,n)=\gcd(k,n)$, and so the above used identity becomes
$\left(2^{\gcd(k,n)}-1 \right) \gcd(2^{k}+1,2^n-1)=2^{\gcd(2k,n)}-1=2^{\gcd(k,n)}-1$, rendering $\gcd(2^{k}+1,2^n-1)=1$.
If $n\equiv 2\pmod 4$ and $k$ is even, say $n=4t+2$, and $k=2\ell$ for some $t,\ell$, then $\gcd(2k,n)=\gcd(4\ell,4t+2)=2\gcd(\ell, 2t+1)=\gcd(2\ell,4t+2)=\gcd(k,n)$, and the displayed identity implies $\gcd(2^{k}+1,2^n-1)=1$. A similar analysis works for $p>2$, as well.
 \end{proof}
 It is easy to see that if $F(x)=x^n$, then $_c\Delta_F(a,b)=_c\Delta_F(1,b/a^n)$, so we shall be using this often below.
\begin{thm}\label{c-diff}
Let $F:\F_{p^n}\to \F_{p^n}$ be the monomial $F(x)=x^d$,  and $c\neq  1$ be fixed. The following statements hold:
\begin{enumerate}
\item[$(i)$] If $d=2$, then $F$ is  APcN, for all $c\neq 1$.
\item[$(ii)$] If $d=p^k+1$, then  $F$ is not PcN, for all $c\neq 1$. Moreover, when $(1-c)^{p^k-1}=1$
and ${n}/{\gcd{(n,k)}}$ is even, the $c$-differential uniformity   $_c\Delta_F\geq p^g+1$, where $g=\gcd(n,k)$.
\item[$(iii)$]   Let $p=3$. If $\displaystyle d=\frac{3^k+1}{2}$, then  $F$ is   PcN, for $c=-1$ if and only if $\displaystyle \frac{n}{\gcd(n,k)}$ is odd.
\item[$(iv)$] If $p=3$ and $F(x)=x^{10}-u x^6-u^2 x^2$,  the $c$-differential uniformity of $F$ is $_c\Delta_F\geq 2$.
\end{enumerate}
\end{thm}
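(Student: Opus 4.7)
The plan is to handle the four parts independently; in each case I expand $_cD_aF(x)=F(x+a)-cF(x)$ and count solutions of $_cD_aF(x)=b$. For \textbf{(i)}, $(1-c)x^2+2ax+a^2=b$ is a genuine quadratic in odd characteristic with at most two solutions; the bound is attained at $a=0$ whenever $b/(1-c)$ is a nonzero square, so $_c\Delta_F=2$. In characteristic $2$ the linear term disappears and $(1+c)x^2=b-a^2$ has a unique Frobenius root, so $F$ is even PcN (hence APcN). For \textbf{(ii)}, additivity of $z\mapsto z^{p^k}$ gives $(1-c)x^{p^k+1}+ax^{p^k}+a^{p^k}x+a^{p^k+1}=b$; taking $a=0$ yields $(1-c)x^{p^k+1}=b$, whose preimages number $\gcd(p^k+1,p^n-1)\ge 2$ in odd characteristic by Lemma~\ref{lem:gcd}, proving $F$ is not PcN. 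The hypothesis $(1-c)^{p^k-1}=1$ forces $(1-c)^{p^k}=1-c$, so the substitution $y=(1-c)x$ (after multiplying the equation by $1-c$) collapses everything into $(y+a)^{p^k+1}=(1-c)b+c\,a^{p^k+1}$. Under $n/\gcd(n,k)$ even Lemma~\ref{lem:gcd} gives $\gcd(p^k+1,p^n-1)=p^g+1$, so choosing $(a,b)$ so the right-hand side is a nonzero $(p^k+1)$-th power produces $p^g+1$ distinct values of $y$ (hence of $x$).

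The heart of the theorem is \textbf{(iii)}, whose cornerstone is $F(x)^2=x^{3^k+1}=:G(x)$, the Gold function which by Theorem~\ref{thm-PN}(2) is PN precisely when $n/\gcd(n,k)$ is odd. For $c=-1$ and $a=0$ the equation collapses to $-F(x)=b$, with unique solution iff $F$ is a permutation; I would deduce $\gcd((3^k+1)/2,3^n-1)=1$ in the relevant range from Lemma~\ref{lem:gcd}. For $a\neq 0,b\neq 0$, I polarize: multiplying $F(x+a)+F(x)=b$ by $F(x+a)-F(x)$ yields $b(F(x+a)-F(x))=D_aG(x)$, and combining with $F(x+a)+F(x)=b$ (using $1/2=-1$ in $\F_3$) produces the single scalar equation $bF(x)+b^2=D_aG(x)$. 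The PN property of $G$, rephrased as bijectivity of $D_aG$, pins down $x$ once $F(x)$ is known; a careful pairing of two putative solutions $x_1\neq x_2$ yields $b(F(x_1)-F(x_2))=D_aG(x_1)-D_aG(x_2)$ which, combined with a second relation obtained by squaring $F(x+a)=b-F(x)$ and applying $F^2=G$, forces a contradiction with the bijectivity of $F$. The subcase $b=0,a\neq 0$ reduces via multiplicativity of $F$ to $((x+a)/x)^d=-1$, uniquely solvable by bijectivity of $F$ together with $1^d\neq-1$. Conversely, when $n/\gcd(n,k)$ is even Lemma~\ref{lem:gcd} gives $\gcd(3^k+1,3^n-1)=3^g+1>2$, so $F$ is not injective and $a=0$ already witnesses multiple solutions.

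For \textbf{(iv)}, the elementary observation is that $F(x)=x^{10}-ux^6-u^2x^2$ contains only even-degree monomials, so $F(-x)=F(x)$; since $-x\neq x$ for $x\neq 0$ in characteristic~$3$, the map $F$ is at least $2$-to-$1$ on $\F_{3^n}^*$. Taking $a=0$ and $b=(1-c)F(x_0)$ for any $x_0\neq 0$ yields two distinct solutions $\pm x_0$ of $(1-c)F(x)=b$, giving $_c\Delta_F\ge 2$ with no further work. The principal obstacle lies in part (iii): converting the PN property of $G$ into genuine uniqueness of the reduced scalar equation $bF(x)+b^2=D_aG(x)$ without inheriting spurious solutions from the squaring step that relates $F$ to $G$. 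Tracking the two sign branches $F(x+a)=\pm(b-F(x))$ and handling the small exceptional locus (e.g.\ $b=\pm F(-a)$, where the $c=1$ derivative equation $D_aF(x)=-b$ may share a common solution with $_{-1}D_aF(x)=b$) is where the argument demands most attention.
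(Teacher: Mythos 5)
Parts (i) and (iv) of your proposal are correct and essentially the paper's own argument (the paper also works at $a=0$; its choice $b=(c-1)(u^2+u-1)$ with solutions $x=1,2$ is exactly your $\pm x_0$ observation with $x_0=1$). For the ``moreover'' clause of (ii) you take a genuinely different and cleaner route: using $(1-c)^{p^k}=1-c$ and $y=(1-c)x$ to rewrite ${_cD_a}F(x)=b$ as $(y+a)^{p^k+1}=(1-c)b+c\,a^{p^k+1}$ reduces the whole count to $\gcd(p^k+1,p^n-1)$, which by Lemma~\ref{lem:gcd} is $p^g+1$ when $n/\gcd(n,k)$ is even; this bypasses the paper's reduction to the trinomial $z^{p^k}+z+1$ and the Coulter--Henderson root count, works in every characteristic, and in fact pins down the exact uniformity for such $c$. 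Two caveats: your non-PcN argument for (ii) (preimage count at $a=0$) is explicitly confined to odd $p$, whereas the statement is for arbitrary characteristic and the paper's device --- choosing $b=-c\,a^{p^k+1}/(1-c)$ so the equation factors as $\bigl((1-c)x+a\bigr)\bigl(x+a/r\bigr)^{p^k}=0$ --- is what handles $(1-c)^{p^k-1}\neq 1$ when $p=2$; and ``PcN hence APcN'' in your characteristic-$2$ remark in (i) clashes with the paper's exact-value definitions.

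The genuine gap is the forward direction of (iii). First, your $a=0$ step requires $F(x)=x^{(3^k+1)/2}$ to be a permutation, i.e.\ $\gcd\bigl((3^k+1)/2,\,3^n-1\bigr)=1$, and this cannot be ``deduced from Lemma~\ref{lem:gcd}'' under the hypothesis that $n/\gcd(n,k)$ is odd: the lemma only controls $\gcd(3^k+1,3^n-1)$, and the division by $2$ matters --- for $k$ odd, $(3^k+1)/2$ and $3^n-1$ are both even, so the gcd is at least $2$ and your very first subcase already fails. Second, the core step for $a\neq 0$, ``a careful pairing of two putative solutions \dots forces a contradiction with the bijectivity of $F$,'' is asserted rather than proved: bijectivity of $D_aG$ (PN-ness of $G$) only yields that distinct solutions $x_1\neq x_2$ of ${_{-1}D_a}F(x)=b$ satisfy $D_aG(x_1)\neq D_aG(x_2)$, hence $F(x_1)\neq F(x_2)$, which is no contradiction; what you would actually need is that $x\mapsto D_aG(x)-bF(x)$ takes the value $b^2$ at most once, and no mechanism for that is offered (you flag this yourself as the ``principal obstacle''). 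The paper proves this direction by a completely different mechanism: it reduces PcN-ness to a permutation-polynomial statement for $(x+1)^{(3^k+1)/2}+x^{(3^k+1)/2}$, substitutes $x=y+y^{-1}$ so that the $(-1)$-derivative becomes $2\,T_{(3^k+1)/2}(y)$ in Chebyshev form, and then applies N\"obauer's criterion $\gcd\bigl((3^k+1)/2,\,3^{2n}-1\bigr)=1$ together with Lemma~\ref{lem:gcd}. Without a substitute for that Chebyshev/N\"obauer step (or a genuine uniqueness argument for $D_aG(x)-bF(x)=b^2$), your proof of the ``if'' implication in (iii) is incomplete.
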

\begin{proof}
We first take $d=2$ and consider the equation $cD_a F(x)=b$. Thus,
\begin{align*}
b=F(x+a)-c F(x)=(1-c)x^2+2ax+a^2,
\end{align*}
and since we have the choice of $b$ (for example, $b=0$), the above equation has two solutions if and only if $c\neq 1$, and therefore it is $APcN$.

We now consider $d=p^k+1$. The equation  $cD_a F(x)=b$ becomes
\begin{align*}
b&=(x+a)^{p^k+1}-c x^{p^k+1}\\
&= x^{p^k+1}+a^{p^k+1} +a x^{p^k} +a^{p^k} x - cx^{p^k+1}\\
&=(1-c) x^{p^k+1} +a\, x^{p^k} +a^{p^k} x +a^{p^k+1}.
\end{align*}
Assuming $c\neq 1$, and choosing $b=-\frac{c\, a^{p^k+1}}{1-c}$, the equation transforms into
\[
 \left((1-c)x+a\right) x^{p^k}+ \frac{a^{p^k}}{1-c}  \left((1-c)x+a\right)=0,
\]
 which is equivalent to
 \begin{align*}
0&=  \left((1-c)x+a\right) \left(x^{p^k}+\frac{a^{p^k}}{1-c}\right)\\
&= \left((1-c)x+a\right) \left(x+\frac{a}{d}\right)^{p^k},
 \end{align*}
 where $r$ is one of the $p^k$-roots of $1-c\neq 0$.

 If $1-c=r^{p^k}\neq  r $ (which always happens if $\gcd(k,n)=1$) there are at least two roots of the equation  $cD_a F(x)=b$ and consequently, $F$ is not $PcN$.

 If $1-c=r^{p^k}=d$ (equivalently, for $c\neq 1$, $(1-c)^{p^k-1}=1$), then we write the equation $cD_a F(x)=b$, for some $a\neq 0$, as (we first multiply it by $d/a^{p^k+1}$)
 \[
 (r x/a)^{p^k+1}+(r x/a)^{p^k}+r x/a+r(1-b/a^{p^k+1})=0,
 \]
 and relabeling $y=r x/a$, $b_1=r(1-b/a^{p^k+1})$, we then get
 \begin{equation}
 \label{eq:b1}
 y^{p^k+1}+y^{p^k}+y+b_1=0.
 \end{equation}
 If $b_1=0$, then the equation becomes $ y^{p^k+1}+y^{p^k}+y= y( y^{p^k}+y^{p^k-1}+1)=0$ with the obvious solution $y=0$.

 We next consider the roots of second equation, that is, $y^{p^k}+y^{p^k-1}+1=0$, or equivalently  (with $z=1/y$), $z^{p^k}+z+1=0$. In~\cite{Li78} it is shown that, a trinomial $z^{p^k}-az-b$ in $\F_{p^n}$ has either zero, one, or $p^g$ roots, where $g=\gcd(n,k)$.

 We can perhaps approach this equation directly, but we
can be more precise and use the method of~\cite{CM04}, which fixed some errors of~\cite{Li78} and made its results more accurate. We will recall what was shown in~\cite{CM04}. Let $f(z)=z^{p^k}-az-b$ in $\F_{p^n}$, $g=\gcd(n,k)$, $m=n/\gcd(n,k)$ and ${\rm Tr}_g$ be the relative trace from $\F_{p^n}$ to $\F_{p^g}$. For $0\leq i\leq m-1$, we define $t_i=\sum_{j=i}^{m-2} p^{n(j+1)}$, $\alpha_0=a,\beta_0=b$. If $m>1$ (note that, if  $m=1$, then $k=n$, so $F(x)=x^2$, treated  in case {\em{(i)}}), then, for $1\leq r\le m-1$, we set
 \[
 \alpha_r=a^{1+p^k+\cdots+p^{kr}} \text{ and } \beta_r=\sum_{i=0}^r a^{s_i} b^{p^{ki}},
 \]
 where $s_i=\sum_{j=i}^{r-1} p^{k(j+1)}$, for $0\leq i\leq r-1$ and $s_r=0$. The trinomial $f$ has no roots in $\F_{p^n}$ if and only if $\alpha_{m-1}=1$ and $\beta_{m-1}\neq 0$. If  $\alpha_{m-1}\neq 1$, then it has a unique root, namely $x=\beta_{m-1}/(1-\alpha_{m-1})$, and, if $\alpha_{m-1}=1,\beta_{m-1}=0$, it has $p^g$ roots in $\F_{p^n}$ given by $x+\delta\tau$, where $\delta\in\F_{p^g}$, $\tau$ is fixed in $F_{p^n}$ with $\tau^{p^k-1}=a$ (that is, a $(p^k-1)$-root of $a$), and, for any $e\in\F^*_{p^n}$ with ${\rm Tr}_g(e)\neq 0$, then
 $\displaystyle x=\frac{1}{{\rm Tr}_g(e)} \sum_{i=0}^{m-1} \left( \sum_{j=0}^i e^{p^{kj}}\right) a^{t_i} b^{p^{ki}}$.
 We could easily simplify some of these parameters using the sum of the geometric sequence, namely
 \[
 s_i=\frac{p^{k(r+1)}-p^{k(i+1)}}{p^k-1},\quad \alpha_r=a^{\frac{p^{k(r+1)}-1}{p^k-1}},
 \]
 though, for our case, these closed forms will not be useful.

 For our case, $a=b=-1$ and we further compute the involved parameters, splitting the analysis in two cases. We denote by $\sigma_t\equiv t\pmod 2$ the parity of $t$, that is $t\pmod 2\in\{0,1\}$. First, recall that $\alpha_0=\beta_0=-1$.

 \noindent
 {\em Case $1$.} $m=\frac{n}{\gcd{(n,k)}}=2\ell+1$, for some $\ell\in\mathbb Z$. Let $r=m-1$. Recall that $s_{m-1}=0$, thus  $\sigma_{s_{m-1}}=0$. The parities of $s_i$, $0\leq i\leq m-2$, are
$\sigma_{s_i}= (m-2-i+1)\pmod 2 = \sigma_i$ (since $m-1$ is even).
 Further,
 \begin{align*}
 \alpha_{m-1}&=(-1)^{1+p^k+\cdots+p^{k(m-1)}}=(-1)^{\sigma_m}=-1,\\
 \beta_{m-1}&=\sum_{i=0}^{m-1} (-1)^{s_i+p^{ki}} = \sum_{i=0}^{m-2} (-1)^{s_i+p^{ki}} + (-1)^{s_{m-1}+p^{k(m-1)}}\\
 &=-1+\sum_{i=0}^{m-2} (-1)^{\sigma_{i+1}}=-1.
 \end{align*}
 Therefore, we conclude that the equation $z^{p^k}+z+1=0$ has a unique solution in $\F_{p^n}$. Thus, when $\frac{n}{\gcd{(n,k)}}$, then $F(x)=x^{p^k+1}$ is not $PcN$, for all $c\neq 1$ with $(1-c)^{p^k-1}=1$, and consequently, for all $c\neq 1$, given our previous argument.

  \noindent
 {\em Case $2$.} $m=\frac{n}{\gcd{(n,k)}}=2\ell$, for some $\ell\in\mathbb Z$. Let $r=m-1$. Recall that $s_{m-1}=0$, thus  $\sigma_{s_{m-1}}=0$. The parities of $s_i$, $0\leq i\leq m-2$, are
$\sigma_{s_i}= (m-2-i+1)\pmod 2 = \sigma_{i+1}$ (since $m-1$ is odd).
 Further,
 \begin{align*}
 \alpha_{m-1}&=(-1)^{1+p^k+\cdots+p^{k(m-1)}}=(-1)^{\sigma_m}=1,\\
 \beta_{m-1}&=\sum_{i=0}^{m-1} (-1)^{s_i+p^{ki}} = \sum_{i=0}^{m-2} (-1)^{s_i+p^{ki}} + (-1)^{s_{m-1}+p^{k(m-1)}}\\
 &=-1+\sum_{i=0}^{m-2} (-1)^{\sigma_{i}}=0.
 \end{align*}
 Therefore, we infer that the equation $z^{p^k}+z+1=0$ has $p^g$ solutions in $\F_{p^n}$. Thus, the initial equation~\eqref{eq:b1} has $p^g+1$ solutions and so, the $c$-differential uniformity in this case (under $(1-c)^{p^k-1}=1$ and $\frac{n}{\gcd{(n,k)}}$ even) is at least $p^g+1$, where $g=\gcd(n,k)$.

 Let us treat now the case of $d=(3^k+1)/2$ under $c=-1$, in $\F_{3^n}$. As used in~\cite{CM04}, our function is $PcN$ if and only if the $c$-derivative $(x+1)^{\frac{3^k+1}{2}}-c x^{\frac{3^k+1}{2}}$ is a permutation polynomial   if and only if $h_c(x)=(x-1)^{\frac{3^k+1}{2}}-c (x+1)^{\frac{3^k+1}{2}}$ is a permutation polynomial. Since $2|3^n-1$, for all $n$, then we can always write $x=y+y^{-1}$, for some $y\in\F_{3^n}$. Our condition (for general $c\neq 1$) becomes
 \begin{align*}
 h_c(x)&=(y+y^{-1}-1)^{\frac{3^k+1}{2}}-c(y+y^{-1}+1)^{\frac{3^k+1}{2}}\\
 &= \frac{(y^2-y+1)^{\frac{3^k+1}{2}}-c(y^2+y+1)^{\frac{3^k+1}{2}}}{y^{\frac{3^k+1}{2}}}\\
 &= \frac{(y+1)^{3^k+1}-c(y-1)^{3^k+1}}{y^{\frac{3^k+1}{2}}}\\
 &= \frac{(1-c)y^{3^k+1}+(1+c)y^{3^k}+(1+c) y+(1-c)}{y^{\frac{3^k+1}{2}}}\\
 &=(1-c) y^{\frac{3^k+1}{2}}+(1+c) y^{\frac{3^k-1}{2}}+(1+c) y^{\frac{-3^k+1}{2}}+(1-c) y^{\frac{-3^k-1}{2}}\\
 &= (1-c) T_{\frac{3^k+1}{2}}(y)+(1+c) T_{\frac{3^k-1}{2}}(y)
 \end{align*}
 is a permutation polynomial ($T_\ell$ is the Chebyshev polynomial of the first kind). If $c=-1$, we obtain that
 $T_{\frac{3^k+1}{2}}(y)$ must be a permutation polynomial which happens~\cite{No68} if and only if $\displaystyle \gcd\left( \frac{3^k+1}{2}, 3^{2n}-1 \right)=1$ if and only if $\displaystyle \gcd\left(  3^k+1 , 3^{2n}-1 \right)=2$. By Lemma~\ref{lem:gcd} a necessary and sufficient condition for that to happen is for $\displaystyle \frac{n}{\gcd(n,k)}$ to be odd.

 We now consider the function
 $F(x)=x^{10}-u x^6-u^2 x^2$ over $\mathbb{F}_{3^n}$. The equation $_cD_a F(x)=b$ becomes
  \begin{align*}
 &  (1-c) x^{10}+a x^9+  u(c-1) x^6 +u a^3 x^3+  u^2(c-1) x^2\\
  &\qquad \qquad \qquad +(a^9+a u^2) x+ a^{10}-u a^6 - u^2 a^2-b=0.
 \end{align*}
 Taking $a=0, b=(c-1)(u^2+u-1)$ the above equation will have solutions $x=1,2$ and so, $_c\Delta_F\geq 2$ (surely, one can take even nonzero values of $a$, in many instances, if not all; for example, for $c=2$ and $u=1$, we can take $a=b=2$, rendering the solutions $x=0,1$).
\end{proof}

\subsection{Some computational data}

Table 1 shows the maximal $c$-differential uniformity (for $a,c\neq0$) for the Gold function $F(x)=x^{2^k+1}$ and the Kasami function $G(x)=x^{2^{2k}-2^k+1}$ over $\mathbb{F}_{2^n}$, for $k=2$.
Note that, for $k=1$, these two functions are equal. 
The maximal $c$-differential uniformity (for $a,c\neq0$), taking $k=1$ for both functions, is equal to 2 for $n\geq 2$, and equal to 3 for $n\geq3$, which can be argued theoretically. The cases $n=1,2$ are straightforward. Let $n\geq3$.
The $c$-derivative of $F(x)=x^3$ over $\mathbb{F}_{2^n}$ is
 \[
_cD_1F(x)=(1+c)x^3+x^2+x+1.
\]
 Taking $b=1$, the equation $_cD_1F(x)=b$ is equivalent to $x((1+c)x^2+x+1)=0$. This equation always has $x=0$ as a solution, while the quadratic equation has two solutions if and only if $Tr(1+c)=0$. Taking $c=\alpha^2+\alpha+1\neq0$ for $n\geq3$, where $\alpha$ is a primitive root of    $\F_{2^n}$, we obtain three solutions to the equation $_cD_aF(x)=b$, and therefore the maximal $c$-differential uniformity (for $c\neq0$) for the Gold/Kasami function $F(x)=x^3$ over $\mathbb{F}_{2^n}$ is~3.

For $k=2$ and $n\geq3$, the results for the Gold and Kasami functions (respectively, $F(x)=x^5$ and $G(x)=x^{13}$) show a maximal $c$-differential uniformity (over $a,c\neq0$) of 3 for $n$ odd (i.e. for $\gcd(n,k)=1$), and of 5 for $n$ even. Note that, by Theorem~\ref{c-diff}$(ii)$, taking $p=2$ and $n\equiv 0\pmod 4$, the result is shown for the Gold function with $k=2$, since, for these cases, the lower bound is 5, and, since the degree of the function is also 5, the $c$-differential uniformity must be exactly 5  for all $c$ such that $(1-c)^3=1$. Hence, the maximal $c$-differential uniformity (for $c\neq0$) for the Gold function is $5$, when $k=2$. We simply double checked computationally (for small dimensions) our proof of Theorem~\ref{c-diff}$(ii)$. It would be interesting to prove theoretically whether our observations hold also for other dimensions, and for~$x^{13}$, or perhaps, for the general Kasami functions.

\begin{table}[]
\centering
\begin{tabular}{ccl}
\hline
\multicolumn{1}{|c|}{n} & \multicolumn{1}{c|}{Gold function, $k=2$} & \multicolumn{1}{l|}{Kasami function, $k=2$} \\ \hline
\multicolumn{1}{|c|}{1} & \multicolumn{1}{c|}{2} & \multicolumn{1}{c|}{2} \\ \hline
\multicolumn{1}{|c|}{2} & \multicolumn{1}{c|}{4} & \multicolumn{1}{c|}{4} \\ \hline
\multicolumn{1}{|c|}{3} & \multicolumn{1}{c|}{3} & \multicolumn{1}{c|}{3} \\ \hline
\multicolumn{1}{|c|}{4} & \multicolumn{1}{c|}{5} & \multicolumn{1}{c|}{5} \\ \hline
\multicolumn{1}{|c|}{5} & \multicolumn{1}{c|}{3} & \multicolumn{1}{c|}{3} \\ \hline
\multicolumn{1}{|c|}{6} & \multicolumn{1}{c|}{5} & \multicolumn{1}{c|}{5} \\ \hline
\multicolumn{1}{|c|}{7} & \multicolumn{1}{c|}{3} & \multicolumn{1}{c|}{3} \\ \hline
\multicolumn{1}{|c|}{8} & \multicolumn{1}{c|}{5} & \multicolumn{1}{c|}{5} \\ \hline
\end{tabular}
\caption{Gold and Kasami functions, $k=2$.}
\label{table1}
\end{table}

Table~\ref{table2} shows the maximal $c-$differential uniformity for the functions $F(x)=x^{10}\pm x^6-x^2$ over $\mathbb{F}_{3^n}, c\in\mathbb{F}_{3^n}\setminus\{0,1\}$. The tests indicate the following behaviour for both polynomials: The maximal $c$-differential uniformity (for $c\neq0,1$) is 2 if $n=2$, $n+1$ for $n=1,3,5$ and 10 for $n\geq 7, n$ odd.
It would be nice to have a proof for the general case $x^{10}-ux^6-u^2x^2, u\in\mathbb{F}_{3^n}$ especially showing that the maximal $c$-differential uniformity is 10 for $n\geq 7,u=\pm1 $ and to see the mathematical reason for jump from 6 to 10 in Table~\ref{table2}.

\begin{table}[h]
\centering
\begin{tabular}{ccc}
\hline
\multicolumn{1}{|c|}{$n$} & \multicolumn{1}{c|}{$F(x)=x^{10}-x^6-x^2$}& \multicolumn{1}{c|}{$F(x)=x^{10}+x^6-x^2$}\\ \hline
\multicolumn{1}{|c|}{1} & \multicolumn{1}{c|}{2} &\multicolumn{1}{c|}{2} \\ \hline
\multicolumn{1}{|c|}{2} & \multicolumn{1}{c|}{2}&\multicolumn{1}{c|}{2} \\ \hline
\multicolumn{1}{|c|}{3} & \multicolumn{1}{c|}{4}&\multicolumn{1}{c|}{4} \\ \hline
\multicolumn{1}{|c|}{5} & \multicolumn{1}{c|}{6}&\multicolumn{1}{c|}{6} \\ \hline
\multicolumn{1}{|c|}{7} & \multicolumn{1}{c|}{10}&\multicolumn{1}{c|}{10} \\ \hline
\multicolumn{1}{|c|}{9} & \multicolumn{1}{c|}{10}&\multicolumn{1}{c|}{10} \\ \hline
\multicolumn{1}{|c|}{11} & \multicolumn{1}{c|}{10}&\multicolumn{1}{c|}{10} \\ \hline
\end{tabular}
\caption{$F(x)=x^{10}\pm x^6-x^2$ over $\mathbb{F}_{3^n}, c\in\mathbb{F}_{3^n}\setminus\{0,1\}$.}
\label{table2}
\end{table}

%

\section{$c$-Differential uniformity for the inverse function}

 Since there has been quite a bit of effort to investigate the inverse function over $\F_{2^n}$ as it is relevant in Rijndael and Advance Encryption Standard, it is natural to wonder how it behaves with respect to $c$-differential uniformity.
 We will need the following lemma (see~\cite{BRS67}, for the first part; the second part is probably known and easy to derive by completing the square in the given quadratic equation, that is, writing $(2ax+b)^2=a^2-4b$).
\begin{lem}
\label{lem10}
Let $n$ be a positive integer. We have:
\begin{enumerate}
 \item[$(i)$] The equation
$x^2 + ax + b = 0$, with $a,b\in\F_{2^n}$, $a\neq 0$,
has two solutions in $\F_{2^n}$ if  $\Tr\left(
\frac{b}{a^2}\right)=0$, and zero solutions otherwise.
\item[$(ii)$]  The equation
$x^2 + ax + b = 0$, with $a,b\in\F_{p^n}$, $p$ odd,
has (two, respectively, one) solutions in $\F_{p^n}$ if and only if the discriminant $a^2-4b$ is a (nonzero, respectively, zero) square in $\F_{p^n}$.
\end{enumerate}
\end{lem}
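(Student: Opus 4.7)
The plan is to handle the two parts separately, since the even and odd characteristic cases require very different techniques; both are classical, and the authors essentially say as much, so my goal is a clean, short proof.

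For part $(i)$, since $a \neq 0$, I would first make the substitution $x = ay$. This transforms $x^2 + ax + b = 0$ into $a^2 y^2 + a^2 y + b = 0$, and dividing by $a^2$ (which is legal since $a \neq 0$) yields the reduced equation
\[
y^2 + y = \frac{b}{a^2}.
\]
The key fact is then that the additive map $\varphi : \F_{2^n} \to \F_{2^n}$ given by $\varphi(y) = y^2 + y$ is $\F_2$-linear with kernel $\F_2 = \{0,1\}$, so its image has index $2$ in $\F_{2^n}$. Since the absolute trace $\Tr : \F_{2^n} \to \F_2$ is a nontrivial $\F_2$-linear functional that vanishes on $\varphi(\F_{2^n})$ (because $\Tr(y^2+y) = \Tr(y^2) + \Tr(y) = 0$ as Frobenius fixes $\Tr$), the image is exactly $\ker \Tr$. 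Therefore the equation $y^2+y = b/a^2$ is solvable iff $\Tr(b/a^2) = 0$, and when solvable it has exactly two solutions $y_0, y_0+1$, which translate back to the two solutions $ay_0, a(y_0+1)$ of the original equation.

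For part $(ii)$, since $p$ is odd, $2$ is invertible in $\F_{p^n}$, and I can simply complete the square. Writing
\[
x^2 + ax + b = \left(x + \frac{a}{2}\right)^2 - \frac{a^2 - 4b}{4},
\]
the equation becomes $(2x+a)^2 = a^2 - 4b$. Hence the number of roots equals the number of square roots of $\Delta := a^2 - 4b$ in $\F_{p^n}$: two when $\Delta$ is a nonzero square, one (namely $x = -a/2$) when $\Delta = 0$, and none when $\Delta$ is a non-square. The bijection $x \mapsto 2x+a$ ensures this count coincides with the count of solutions of the original equation.

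No step here is genuinely hard; the only subtlety worth mentioning is the trace characterization in part $(i)$, which relies on knowing that the image of the Artin--Schreier map $y \mapsto y^2 + y$ is precisely $\ker \Tr$. If one wanted to avoid citing this, the dimension count $\dim_{\F_2} \varphi(\F_{2^n}) = n - 1$ combined with the containment $\varphi(\F_{2^n}) \subseteq \ker \Tr$ (which is also $(n-1)$-dimensional) forces equality, which is the only nontrivial ingredient in the whole lemma.
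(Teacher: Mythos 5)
Your proof is correct and matches the paper's (very brief) treatment: the paper simply cites \cite{BRS67} for part $(i)$ — for which your Artin--Schreier argument that the image of $y\mapsto y^2+y$ equals $\ker \Tr$ is the standard proof — and for part $(ii)$ it explicitly suggests completing the square to get $(2x+a)^2=a^2-4b$, which is exactly what you do. No gaps; the only remark is that the paper's displayed hint contains a typo ($(2ax+b)^2$ instead of $(2x+a)^2$), which your version silently corrects.
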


 \subsection{The inverse function in even characteristic}
 We first treat the even  characteristic.
 \begin{thm}
 Let $n$ be a positive integer, $1\neq c\in\F_{2^n}$ and $F:\F_{2^n}\to\F_{2^n}$ be the inverse function defined by $F(x)=x^{2^n-2}$. We have:
 \begin{enumerate}
 \item[$(i)$] If $c=0$, then $F$ is PcN (that is, $F$ is a permutation polynomial).   \item[$(ii)$] If $c\neq 0$ and $\Trn(c)=\Trn(1/c)=1$, the $c$-differential uniformity of $F$ is $2$ $($and hence $F$ is APcN$)$.
 \item[$(iii)$] If $c\neq 0$ and $\Trn(1/c)=0$, or $\Trn(c)=0$, the $c$-differential uniformity of $F$ is $3$.
  \end{enumerate}
 \end{thm}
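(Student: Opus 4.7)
The plan is to do a case analysis on the equation $_cD_aF(x)=b$ after reducing to $a=1$ via scaling. Part (i) is immediate: when $c=0$, one has $_0D_aF(x) = F(x+a)$, which is a permutation of $\F_{2^n}$, so $_0\Delta_F(a,b) = 1$ for every $a$ and $b$.

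For parts (ii) and (iii), I would first exploit the scaling identity for monomials, namely $_c\Delta_F(a,b)={}_c\Delta_F(1,b/a^{2^n-2})$, to fix $a=1$; the case $a=0$ gives $(1-c)F$, still a permutation, hence uniformity $1$, so it does not influence the maximum. Writing $y^{2^n-2}=1/y$ for $y\neq 0$ and multiplying the equation $(x+1)^{2^n-2}+c\,x^{2^n-2}=b$ by $x(x+1)$ (valid for $x\notin\{0,1\}$) yields the quadratic
\[
bx^2+(b+c+1)x+c=0.
\]
I would then peel off two degenerate branches: $b=0$ is linear with unique solution $x=c/(c+1)\notin\{0,1\}$, and $b=c+1$ kills the linear coefficient, forcing $x^2=c/(c+1)$ with a unique Frobenius root. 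For every other $b$, Lemma~\ref{lem10}(i) gives $0$ or $2$ solutions in $\F_{2^n}$ according as $\Tr\!\bigl(bc/(b+c+1)^2\bigr)$ is nonzero or zero, and plugging in $x=0$ or $x=1$ directly shows that neither is ever a root of the quadratic (one gets $c\neq 0$ and $1\neq 0$, respectively).

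Next I would incorporate the boundary contributions: in characteristic $2$, $x=0$ solves the original equation iff $b=1$, and $x=1$ iff $b=c$. Evaluating $\Tr\!\bigl(bc/(b+c+1)^2\bigr)$ at these values of $b$ reduces to $\Tr(1/c)$ and $\Tr(c)$ respectively. Hence the pointwise uniformity is
\[
{}_c\Delta_F(1,b)=\begin{cases}
1 & \text{if } b\in\{0,c+1\},\\
1+2\,[\Tr(1/c)=0] & \text{if } b=1,\\
1+2\,[\Tr(c)=0] & \text{if } b=c,\\
0 \text{ or } 2 & \text{otherwise,}
\end{cases}
\]
which immediately yields $_c\Delta_F\leq 3$, with equality exactly when $\Tr(c)=0$ or $\Tr(1/c)=0$, proving (iii).

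The main obstacle is certifying that in case (ii), when $\Tr(c)=\Tr(1/c)=1$, the maximum really is $2$ and not $1$. For this I would invoke the conservation identity $\sum_{b\in\F_{2^n}}{}_c\Delta_F(1,b)=2^n$, which holds because each $x$ produces exactly one $b$. Under hypothesis (ii) the four distinguished values $b\in\{0,1,c,c+1\}$ each contribute $1$, so the remaining $2^n-4$ values of $b$ collectively contribute $2^n-4$; since each contributes $0$ or $2$, exactly $2^{n-1}-2$ of them attain the value $2$. This count is strictly positive for $n\geq 3$, confirming $_c\Delta_F=2$ and completing the proof.
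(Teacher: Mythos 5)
Your argument is correct wherever the theorem itself is true, and for part (i), part (iii) and the upper bound ${}_c\Delta_F\le 3$ it follows essentially the paper's route: reduce to $a=1$ by the monomial scaling, multiply by $x(x+1)$, and apply Lemma~\ref{lem10}$(i)$. Your only cosmetic difference there is that you keep the single quadratic $bx^2+(b+c+1)x+c=0$ and treat $x=0,1$ as boundary contributions, while the paper splits into the cases $b=1$, $b=c$ and generic $b$; your checks that $x=0,1$ are never roots of the quadratic and that $b=0$ and $b=c+1$ each give exactly one solution are precisely what makes that packaging legitimate. The genuine divergence is in part (ii), where one must exhibit a $b$ with two solutions. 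The paper does this by explicitly constructing $b$ as a root of an auxiliary quadratic, split according to the parity of $n$, using $\Trn(1/c)=1$ together with $\Trn(1)=1$ for odd $n$, and $\Trn(c)+\Trn(1/c)=0$ for even $n$. Your conservation identity $\sum_{b}{}_c\Delta_F(1,b)=2^n$, giving exactly $2^{n-1}-2$ values of $b$ with two preimages, is shorter, avoids the parity split, and is in fact more robust: the paper's even-$n$ construction divides by $c^2+c+1$, which vanishes e.g.\ when $n\equiv 2\pmod 4$ and $c$ is a primitive cube root of unity (a case satisfying the hypotheses of (ii)), whereas your count covers it without change. One loose end you should not pass over in silence: you restrict to $n\ge 3$ and then declare the proof complete. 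For $n=1$ parts (ii)--(iii) are vacuous, but for $n=2$ your count is $0$, and indeed the conclusion of (ii) fails there: $x^{2^n-2}=x^2$ is linear on $\F_{4}$, so $F$ is PcN for every $c\ne 0,1$ even though $\Trn(c)=\Trn(1/c)=1$. So state explicitly that (ii) requires $n\ge 3$ --- an exceptional case that the paper's statement and proof also overlook --- rather than leaving $n=2$ unaddressed.
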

 \begin{proof}
 We start with the $c$-differential uniformity equation at $a$, namely, $(x+a)^{2^n-2}+cx^{2^n-2}=b$.

 If $c=0$, we have at most one solution for the $c$-differential equation, since then the above  $c$-differential uniformity equation  becomes $(x+a)^{2^n-2}=b$. If $b=0$, then $x=a$ is the only solution, otherwise, multiplying by $x+a$, we get $b(x+a)=1$, which has also only one solution. The proof of $(i)$ is done.

 So, next,  we may assume $c\neq 0$.
 If $a=0$, the equation becomes $(1+c) x^{2^n-2}=b$. If $b=0$, then $x=0$ is the only solution. If $b\neq 0$ (so $x\neq 0$), multiplying by $x$, the equation becomes $1+c=bx$, which has only one solution.

We next assume that $c\neq 0, a\neq 0$. Recall that,   if $F(x)=x^d$ and $a\neq0$, then $_c\Delta_F(a,b)=_c\Delta_F(1,b/a^d)$, so we can look at the equation 

\begin{equation}
\label{eq:inverse1}
 (x+1)^{2^n-2}+cx^{2^n-2}=b.
\end{equation}
 If $b=0$, we easily get only one solution, since the above equation is equivalent to $c(x+1)=x$, so we may assume below that $b\neq 0$.

\noindent
{\em Case $1$.} Let $b=1\neq c$. Then $x=0$ is a solution of~\eqref{eq:inverse1}. Assume now that $x\neq 0,1$ (certainly, $x= 1$ is not a solution). Multiplying~\eqref{eq:inverse1} by $x(x+1)$ we get $x+c(x+1)=x(x+1)$, which is equivalent to $x^2+cx+c=0$. By Lemma~\ref{lem10}$(i)$, this equation has two solutions (we would have only one solution if $c=0$)  if and only if $\Trn(1/c)=0$. Thus, altogether, we have three solutions for~\eqref{eq:inverse1} under  $\Trn(1/c)=0$. If  $\Trn(1/c)=1$, then ~\eqref{eq:inverse1}  has only the solution $x=0$.

\noindent
{\em Case $2$.} Let $b=c\neq 0,1$. Certainly, $x=1$ is a solution of~\eqref{eq:inverse1}, while $x=0$ is not a solution, so we assume now that $x\neq 0,1$. Again, multiplying~\eqref{eq:inverse1}  by $x(x+1)$ we get
$x+c(x+1)=cx(x+1)$, which is equivalent to $x^2+c^{-1} x+1=0$, which has two solutions if and only if $\Trn \left(1/c^{-2}\right)=\Trn(c^2)=\Tr(c)=0$. Thus, altogether, we have three solutions for~\eqref{eq:inverse1} under  $\Trn(c)=0$.

\noindent
{\em Case $3$.} Let $b\neq 1,c$ (so, $x\neq 0,1$). Multiplying~\eqref{eq:inverse1}  by $x(x+1)$ we get $x+c(x+1)=bx(x+1)$, which has a unique solution if $b=0,\,c\neq1$; otherwise, equation~\eqref{eq:inverse1} is equivalent to $x^2+\left(\frac{b+c+1}{b}\right) x+\frac{c}{b}=0$. If $b=c+1$, then we have a unique solution, otherwise, we have two solutions if and only if
$\Trn\left(\frac{bc}{b^2+c^2+1} \right)=0$. As we saw, it will be enough to consider the case of $\Trn(c)=\Trn(1/c)=1$, since otherwise, we have three solutions for some $b$. Below, we argue that we always can find some $b\neq 0$ for which this last trace, $\Trn\left(\frac{bc}{b^2+c^2+1} \right)=0$.

If $n$ is odd, then we claim that there exists some value of $u$ such that
\[\frac{uc}{u^2+c^2+1}=\frac{c+1}{c}=1+\frac{1}{c},\]
which follows from the fact that this last equation is equivalent to $(c+1) u^2+c^2 u+(c+1)^2=0$, that is, $u^2+\frac{c^2}{c+1} u+ (c+1)^2=0$, which, by Lemma~\ref{lem10}$(i)$, has solutions if and only if $0=\Trn\left(\frac{(c+1)^2}{c^4/(c+1)^2} \right)=\Trn\left(\frac{(c+1)^4}{c^4} \right)=\Trn\left(\frac{c+1}{c} \right)=\Trn\left(1+\frac{1}{c} \right)$. Since $\Trn(1/c)=1$ and $n$ is odd (thus, $\Trn(1)=1$), then $\Trn\left(1+\frac{1}{c} \right)=0$. Taking $b=u$ such solution, then the condition $\Trn\left(\frac{bc}{b^2+c^2+1} \right)=0$ will hold and consequently, we have two roots of the $c$-differential equation in this case.

We now let $n$ be even and consider the equation
$\displaystyle \frac{uc}{u^2+c^2+1}=1+c+\frac{1}{c},$
which is equivalent to  $\displaystyle u^2+\frac{c^2}{c^2+c+1} u+(c+1)^2=0$. This last equation has solutions (by Lemma~\ref{lem10}$(i)$) if and only if
$0=\Trn\left(\frac{(c+1)^2}{c^4/(c^2+c+1)^2} \right)=\Trn\left(\frac{(c+1)}{c^2/(c^2+c+1)} \right)=\Trn\left(\frac{(c+1)(c^2+c+1)}{c^2}\right)=\Trn\left(\frac{c^3+1}{c^2} \right)=\Trn(c)+\Trn(1/c^2)=\Trn(c)+\Trn(1/c)=0$, which will certainly happen. By taking $b$ to be equal to such a solution $u$, then, $\Trn\left(\frac{bc}{b^2+c^2+1} \right)=\Trn(1+c+1/c)=0$, and consequently, the $c$-differential equation will have two solutions.
 \end{proof}

\subsection{The inverse function in odd characteristic}
 We now treat the case of the inverse for odd characteristic.
 \begin{thm}
 Let $p$ be an odd prime, $n\geq 1$ be a positive integer, $1\neq c\in\F_{p^n}$ and $F:\F_{p^n}\to\F_{p^n}$ be the inverse $p$-ary function defined by $F(x)=x^{p^n-2}$. We have:
 \begin{enumerate}
 \item[$(i)$] If $c=0$, then $F$ is PcN (that is, $F$ is a permutation polynomial).  \item[$(ii)$] If $c\neq 0,4,4^{-1}$, $(c^2-4c)\in \left(\F_{p^n}\right)^2$, or $(1-4c)\in \left(\F_{p^n}\right)^2$, the $c$-differential uniformity of $F$ is $3$.
 \item[$(iii)$] If $c=4,4^{-1}$, the $c$-differential uniformity of $F$ is $2$ $($and hence $F$ is APcN$)$.
  \item[$(iv)$] If  $c\neq 0$, $(c^2-4c)\notin \left(\F_{p^n}\right)^2$ and $(1-4c)\notin \left(\F_{p^n}\right)^2$, the $c$-differential uniformity of $F$ is $2$ $($and hence $F$ is APcN$)$.
    \end{enumerate}

 \end{thm}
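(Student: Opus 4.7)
The plan is to mimic the even-characteristic argument, reducing everything to counting roots of a quadratic obtained by clearing denominators. Part $(i)$ is immediate: when $c=0$, $_cD_a F(x)=F(x+a)=b$ has at most one solution because $F(x)=x^{p^n-2}$ is a permutation of $\F_{p^n}$. For $c\ne 0$ and $a\ne 0$, the monomial scaling identity $_c\Delta_F(a,b)={}_c\Delta_F(1,b\,a^{-(p^n-2)})$ reduces the problem to $a=1$; the remaining subcase $a=0,\ c\ne 1$ gives $(1-c)F(x)=b$, which has a unique solution. So the substance of the proof is counting roots of
\[
(x+1)^{p^n-2}-c\,x^{p^n-2}=b.
\]

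I would split by the value of $x$. The boundary $x=0$ yields a solution only when $b=1$; the boundary $x=-1$ yields a solution only when $b=c$ (using $(-1)^{p^n-2}=-1$ in odd characteristic). For $x\notin\{0,-1\}$ I multiply through by $x(x+1)$ to obtain
\[
b\,x^{2}+(b+c-1)\,x+c=0.
\]
If $b=0$ this is linear with unique root $x=c/(1-c)$; if $b\ne 0$, Lemma~\ref{lem10}$(ii)$ provides $0$, $1$, or $2$ roots depending on whether the discriminant
\[
D(b)=(b+c-1)^{2}-4bc=(b-(c+1))^{2}-4c
\]
is a non-square, zero, or nonzero square in $\F_{p^n}$. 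Neither $0$ nor $-1$ is ever a root of this quadratic (they would force $c=0$ or $1=0$, respectively), so the boundary and quadratic contributions never overlap.

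The privileged values are $b=1$ and $b=c$, with $D(1)=c^{2}-4c=c(c-4)$ and $D(c)=1-4c$. At $b=1$ the root $x=0$ adds one to the quadratic's count, so the total is $1$, $2$, or $3$ depending on whether $c(c-4)$ is a non-square, zero, or nonzero square; the analogous statement holds at $b=c$ with discriminant $1-4c$. For every other nonzero $b$ the count is at most $2$. Hence $_c\Delta_F\le 3$, with $_c\Delta_F=3$ exactly when at least one of $c(c-4)$, $1-4c$ is a nonzero square, which is case $(ii)$. Under the hypotheses of $(iii)$, $c=4$ forces $D(1)=0$ (so $b=1$ yields $x=0$ plus a double root, totalling $2$) while $c=4^{-1}$ forces $D(c)=0$ symmetrically; under $(iv)$ both $c(c-4)$ and $1-4c$ are non-squares, so neither boundary case produces more than $2$ solutions.

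The step I expect to cost the most care is showing that the upper bound is actually attained, i.e.\ $_c\Delta_F\ge 2$ in cases $(iii)$ and $(iv)$. For this I would exhibit some admissible $b$ with $D(b)$ a nonzero square: as $b$ varies, $D(b)=(b-(c+1))^{2}-4c$ traces a translated copy of the squares, a set of size $(p^{n}+1)/2$, while the non-squares of $\F_{p^n}^\ast$ together with $0$ form a set of size $(p^{n}+1)/2$ as well, so one verifies by a pigeonhole/counting argument that at least one value of $D(b)$ must be a nonzero square once $p^n$ is not too small, with the finitely many small-field exceptions handled directly. The remaining bookkeeping --- distinguishing boundary from quadratic roots and treating the degenerate values $b\in\{0,1,c\}$ --- is routine.
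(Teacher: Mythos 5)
Your skeleton is the same as the paper's: reduce to $a=1$, peel off the boundary solutions $x=0$ (at $b=1$) and $x=-1$ (at $b=c$), clear denominators to the quadratic $bx^2+(b+c-1)x+c=0$, and read the counts off the discriminants $c^2-4c$ and $1-4c$. The upper bound $_c\Delta_F\le 3$, case $(i)$, and case $(ii)$ come out exactly as in the paper. The two places where your write-up diverges are precisely the two places where it has gaps.

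First, the existence of an admissible $b$ with $D(b)=(b-(c+1))^2-4c$ a nonzero square --- which is what the lower bound $_c\Delta_F\ge 2$ in case $(iv)$ rests on --- is not delivered by the pigeonhole you describe: the image of $b\mapsto D(b)$ has size $(p^n+1)/2$ and the set of nonzero squares has size $(p^n-1)/2$, so these two sets can a priori partition $\F_{p^n}$; pigeonhole only forces the image to meet the non-squares together with $0$, which is the wrong direction. You would need either the character-sum count of quadratic residues at prescribed distance, or, as the paper does, explicit witnesses: $b=2^{-1}(c-2)$ gives $D(b)=2^{-2}(c-4)^2$, which works for $c\neq\pm2,4$, after which the residual values of $c$ and the small fields are handled one by one (including the genuine exception $c=-1$, $p=3$, $n=2$). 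Second, and more importantly, in case $(iii)$ you only inspect the privileged value of $b$ whose discriminant vanishes. For $c=4$ you must still examine $b=c$, where the discriminant is $1-4c=-15$; if $-15$ is a nonzero square, that value of $b$ yields three solutions. This actually happens: for $p=17$, $n=1$, $c=4$ one has $-15=2=6^2$ and $x\in\{-1,2,9\}$ all satisfy $(x+1)^{-1}-4x^{-1}=4$, so $_c\Delta_F\ge 3$ there. Your own unified discriminant $D(b)$ makes this check unavoidable, yet you skip it. To be fair, the paper's Case~2 exposes the same obstruction without resolving it, so this is a defect of statement $(iii)$ itself (and symmetrically for $c=4^{-1}$ via $c^2-4c$); but a correct proof cannot assert that $c=4$ \emph{forces} uniformity $2$ --- it must add the hypothesis that the other discriminant is a non-square or zero, or else flag the exception.
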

 \begin{proof}
 The proof for $(i)$ ($c=0$, as well as $a=0$) is similar to the one for characteristic 2.

We next assume that $c\neq 0$. For $a=0$, as in the case of even characteristic, the corresponding equation $(x+a)^{p^n-2}-cx^{p^n-2}=b$ has one solution. We assume now that $a\neq 0$. As in characteristic $2$, we only need to investigate the equation
\begin{equation}
\label{eq:inverse2}
 (x+1)^{p^n-2}-cx^{p^n-2}=b.
\end{equation}
 If $b=0$, we easily get only one solution, so we may assume below that $b\neq 0$.

\noindent
{\em Case $1$.} Let $b=1\neq c$. Then $x=0$ is a solution of~\eqref{eq:inverse2}. Assume now that $x\neq 0$ (also, $x\neq -1$, since then, $1=b=c$, which is impossible). Multiplying~\eqref{eq:inverse2} by $x(x+1)$ we get $x-c(x+1)=x(x+1)$, which is equivalent to $x^2+cx+c=0$. By Lemma~\ref{lem10}$(ii)$, this equation has solutions  if and only if the discriminant $D_1=c^2-4c\in \left(\F_{p^n}\right)^2$ (that is, $D_1$ is a square in $\F_{p^n}$); we have two solutions if $D_1\neq 0$ and one solution if $D_1=0$. Thus, altogether, we have three solutions for~\eqref{eq:inverse2} if  $0\neq D_1\in \left(\F_{p^n}\right)^2$. If  $D_1=0$ (that is, $c=4$; we operate under $c\neq 0$), then~\eqref{eq:inverse2}  has only the solutions $x=-2$ and the prior $x=0$.

\noindent
{\em Case $2$.} Let $b=c\neq 0,1$. Now, $x=-1$ is a solution of~\eqref{eq:inverse2}, so we next assume that $x\neq -1$  (also, $x\neq 0$, since this is not a solution unless $c=1$, which is impossible) . Multiplying~\eqref{eq:inverse2}  by $x(x+1)$ we get
$x-c(x+1)=cx(x+1)$, which is equivalent to $x^2+(2-c^{-1}) x+1=0$, which, by Lemma~\ref{lem10}$(ii)$ has two (respectively, one)  solutions if and only if $D_2=(2-c^{-1})^2-4=(1-4c) c^{-2}\in \left(\F_{p^n}\right)^2$ and $D_2\neq 0$ (respectively, $D_2=0$). Thus, altogether, we have three solutions for~\eqref{eq:inverse2} under  $0\neq 1-4c\in \left(\F_{p^n}\right)^2$ and two solutions if $c=4^{-1}$.

\noindent
{\em Case $3$.} Let $c\neq b\neq 0,1$ (so, $x\neq 0,-1$). Multiplying~\eqref{eq:inverse2}  by $x(x+1)$ we get $x-c(x+1)=bx(x+1)$, that is, $x^2+\left(\frac{b+c-1}{b}\right) x+\frac{c}{b}=0$. If $D_3=\left( \frac{b+c-1}{b}\right)^2-4\frac{c}{b}=0$, that is, $(b+c-1)^2=4bc$, then we have a unique solution, otherwise, we have two solutions if and only if
$0\neq D_3\in \left(\F_{p^n}\right)^2$, that is, $ 0\neq (b+c-1)^2-4bc\in \left(\F_{p^n}\right)^2$.

Below, we argue that we always can find some $b\neq 0,1,c$ for which $(b+c-1)^2-4bc\in \left(\F^*_{p^n}\right)^2$, except for $c=-1$, $p=3$, $n=2$, where, we can only find some values of $b$ for which $(b+c-1)^2-4bc=0$.

 If $c\neq -2,2,4$, then we can take $b=2^{-1} (c-2)$ and consequently, $b\neq 0,1,c$, and
 \[
 (b+c-1)^2-4bc=2^{-2} (c-4)^2\neq 0.
 \]


If $c=2$, or $c=4$, and $p\neq3, 5$, then, we can take $b=2(c+1)$, and $b\neq 0,1,c$, and
 \[
 (b+c-1)^2-4bc=(1-c)^2\neq0.
 \]
 Let $c=2$, and $p=3$. Then, $(b+c-1)^2-4bc=b^2+1$.
 If $n>2$, then we can take $b=\alpha-\alpha^{-1}$, where $\alpha$ is a primitive root of $\F_{3^n}$  (we here avoid the primitive polynomials $x^2\pm x-1=0$ over $\F_3$, since then, $b\in\{-1,1\}$). Consequently, $b\neq 0,1,-1$, and
 \[
 b^2+1=(\alpha+\alpha^{-1})^2\neq 0.
 \]
 Equation~\eqref{eq:inverse2} has then, in this case, two or fewer solutions.
 If $n=2$, then, we write $\F_{3^2}=\frac{\F_3[x]}{\langle x^2- x-1\rangle}=\F_3(\alpha)$, where $\alpha$ is a root of the primitive polynomial $x^2-x-1=0$.  Note that (recall Case 1) $D_1=2^2-4\cdot2=-4=2=(\alpha+1)^2\in\left(\F_{3^n}\right)^2\neq0$. Since, if $b\neq0,1,c$, equation~\eqref{eq:inverse2}  cannot have more than two solutions, we conclude that the $c$-differential uniformity is~$3$.


Let $c=2$,  and $p=5$. Then, $C_1=c^2-2c=-4=1=1^2\in\left(\F_{5^n}\right)^2\neq0$. Since, if $b\neq0,1,c$, equation~\eqref{eq:inverse2}  cannot have more than two solutions, the $c$-differential uniformity is 3 in this case.

 If $c=4$, and $p=3$, then $c=1$, which is a contradiction with  $c\neq1$.

  Let $c=4$ and $p=5$.  Then, $(b+c-1)^2-4bc=b^2+4$.
 If $n>2$, then, as before, we can take $b=\alpha-\alpha^{-1}$, where $\alpha$ is a primitive root of $\F_{5^n}$,  and consequently, $b\neq 0,1,-1$, and
  \[
 (b-2)^2+4b=b^2+4=(\alpha+\alpha^{-1})^2\neq 0.
 \]
  If $n=2$, then, we write $\F_{5^2}=\frac{\F_5[x]}{\langle x^2-x+2\rangle}=\F_5(\alpha)$, where $\alpha$ is a root of the primitive polynomial $x^2-x+2=0$. Taking $b=\alpha+3\neq 0,1,-1$, then (recall that $\alpha^2-\alpha+2-0$)
 \[
  (b-2)^2+4b=(2\alpha+2)^2\neq 0.
  \]

Let now $c=-2$. Note that, then $D_2=\left(\frac{3}{2}\right)^2\in\left(\F_{p^n}\right)^2$. Furthermore, for $p\neq3$, $D_2\neq0$, rendering at least three solutions of~\eqref{eq:inverse2} for $p\neq3$. If  $p=3$, $c=-2=1$, which is excluded from this theorem, since it is covered by the classical results.

Note that, by Cases 1, 2 and 3, equation~\eqref{eq:inverse2} cannot have more than three solutions, so the $c$-differential uniformity is always at most 3.

 The proof of the theorem is done.
  \end{proof}

\end{document}